\documentclass[12pt]{article}
%%%%%%%%%%%%%%%%%%%%%%%%%%%%%%%%%%%%%%%%%%%%%%%%%%%%%%%%%%%%%%%%%%%%%%%%%%%%%%%%%%%%%%%%%%%%%%%%%%%%%%%%%%%%%%%%%%%%%%%%%%%%%%%%%%%%%%%%%%%%%%%%%%%%%%%%%%%%%%%%%%%%%%%%%%%%%%%%%%%%%%%%%%%%%%%%%%%%%%%%%%%%%%%%%%%%%%%%%%%%%%%%%%%%%%%%%%%%%%%%%%%%%%%%%%%%
\usepackage{eurosym}
\usepackage[all]{xy}
\usepackage{graphics,xcolor}
\usepackage{amsmath, amsfonts}
\usepackage{amssymb, graphicx}
\usepackage{float}
\usepackage{amsthm}
\usepackage[english]{babel}
\usepackage{hyperref}
\usepackage{epsfig}
\usepackage{amsmath}
\usepackage{enumerate}
\usepackage{setspace}
\usepackage{yhmath}
\usepackage{setspace}
\usepackage{natbib}
\usepackage{mathtools}

\setcounter{MaxMatrixCols}{10}
%TCIDATA{OutputFilter=LATEX.DLL}
%TCIDATA{Version=5.00.0.2570}
%TCIDATA{<META NAME="SaveForMode" CONTENT="1">}
%TCIDATA{Created=Thursday, June 04, 2015 11:30:50}
%TCIDATA{LastRevised=Thursday, May 19, 2022 09:01:08}
%TCIDATA{<META NAME="GraphicsSave" CONTENT="32">}
%TCIDATA{<META NAME="DocumentShell" CONTENT="Standard LaTeX\Blank - Standard LaTeX Article">}
%TCIDATA{Language=American English}
%TCIDATA{CSTFile=LaTeX article (bright).cst}

\newtheorem{theorem}{Theorem}

\newtheorem{corollary}{Corollary}

\newtheorem{example}{Example}

\newtheorem{lemma}{Lemma}

\begin{document}

\title{Solidarity to achieve stability\thanks{
Jorge Alcalde-Unzu and Oihane Gallo acknowledge the financial support from the Spanish Government through grant PID2021-127119NB-I00 funded by MCIN/AEI/10.13039/501100011033 and by ``ERDF A way of making Europe". Jorge Alcalde-Unzu acknowledges the financial support from Universidad P\'ublica de Navarra through grant PJUPNA2023-11403. Oihane Gallo acknowledges the Swiss National Science Foundation (SNSF) through Project 100018$\_$192583 as the main financial support. Oihane Gallo and Elena Inarra aknowledge the financial support from the Spanish Government through grants PID2019-107539GB-I00 funded by MCIN/AEI/10.13039/501100011033 and by ``ERDF A way of making Europe". Elena Inarra acknowledges financial support from the Basque Government through grant IT1697-22. Juan D. Moreno-Ternero acknowledges financial support from the Spanish Government through grant PID2020-115011GB-I00, funded by MCIN/AEI/10.13039/501100011033 and Junta de Andaluc\'{\i}a through grants P18-FR-2933 and A-SEJ-14-UGR20.}} 
\author{Jorge Alcalde-Unzu\thanks{
Universidad P\'{u}blica de Navarra, Spain; \href{mailto:jorge.alcalde@unavarra.es}{\tt jorge.alcalde@unavarra.es}} \and Oihane Gallo \thanks{University of Lausanne, Switzerland; \href{mailto:oihane.gallo@unil.ch}{\tt oihane.gallo@unil.ch}} \and Elena Inarra\thanks{University
of the Basque Country, Spain; \href{mailto:elena.inarra@ehu.eus}{\tt elena.inarra@ehu.eus}} \and Juan D. Moreno-Ternero\thanks{%
Universidad Pablo de Olavide, Spain; \href{mailto:jdmoreno@upo.es}{\tt jdmoreno@upo.es}}}
\maketitle

\begin{abstract} \medskip
\noindent Agents may form coalitions. Each coalition shares its endowment among its agents by applying a sharing rule. The sharing rule induces a coalition formation problem by assuming that agents rank coalitions according to the allocation they obtain in the corresponding sharing problem. We
characterize the sharing rules that induce a class of stable coalition formation problems as those that satisfy a natural axiom that formalizes the principle of solidarity. Thus, solidarity becomes a sufficient condition to achieve stability.

\medskip 

\noindent \textbf{\textit{JEL numbers}}\textit{: C71, C72, D63.}\medskip

\noindent \textbf{\textit{Keywords}}\textit{: solidarity, stability, coalition formation, sharing rule.}
\end{abstract}

\newpage

\section{Introduction}

\noindent According to the Merriam-Webster Dictionary, solidarity is defined as \textit{unity (as of a group or class) that produces or is based on community of interests, objectives, and standards}. It is a fundamental ethical principle that can be traced back to ancient philosophers such as Socrates and Aristotle and it is also connected to the slogan of the French revolution. It is also one of the six titles of the Charter of Fundamental Rights of the European Union. 

\medskip

\noindent The principle of solidarity has been often invoked in the axiomatic approach to economic design. %It underlies many relational axioms formalizing a general idea: 
It can be formally stated as follows: ``if the environment (e.g., resources, technology, population, or preferences) in which a group of people find themselves changes, and if no one in this group is responsible for the change, the welfare of all of them should be affected in the same direction: either they all end up at least as well off as they were initially, or they all end up at most as well off" \cite[]{thomson2021}. Axioms formalizing this idea have been crucial to characterize egalitarian allocation rules in diverse settings \cite[]{roemer1986equality, moulin1987egalitarian, moulin1989public, moreno2006impartiality, MARTINEZ2022740}. They have also been instrumental to characterize focal egalitarian rules in the theory of axiomatic bargaining and cooperative game theory \cite[]{kalai1975other, kalai1977proportional, thomson1980monotonicity, young1988distributive, chun1988monotonicity}. Thus, the egalitarian implications of the principle of solidarity have been well explored. Here, we focus on less explored implications, highlighting its role as a means to guarantee stability in contexts of coalition formation.

\medskip

\noindent Coalition formation has been the object of a large literature dealing with a plethora of social and economic problems such as cartel formation, lobbies, customs unions, conflict resolution, public goods provision, political party formation, etc. \cite[]{ray2007game, GRABISCH2012175, ray2015coalition}. A central concern in this literature has been stability, that is, immunity of a coalitional arrangement to ``blocking" \cite[]{perry1994noncooperative, seidmann1998theory,PULIDO2006860}. To be precise, a partition of agents into coalitions is (core) stable if there is no coalition in which each of its members strictly
prefers it to the coalition to which they belong in the partition.\footnote{Other stability concepts have been analyzed in the literature. See, for instance, \cite{karakaya2011hedonic} and \cite{gusev2021nash}.} Our focus here is stability for environments in which coalition members have an endowment to be shared among them. In these contexts, we can define for each agent an individual preference over the possible coalitions she can be part of, depending on the sharing rule used to distribute the endowment of each coalition: if the individual payoff an agent receives in one coalition is larger than her payoff in another, then this individual will prefer the former coalition to the latter. As preferences are constructed on the basis of a sharing rule, we can speak of ``the coalition formation problem induced by the sharing rule". And a natural question is whether there exist conditions on the sharing rule that guarantee the existence of stable partitions in its induced coalition formation problem. 
We show that such a condition is \textit{solidarity}, formalized by the following axiom: the arrival of new agents in a coalition, whether or not it is accompanied by a change in the endowment available to this coalition, should affect the welfare of all the incumbent agents in the same direction.

\medskip

\noindent Our solidarity axiom is reminiscent of the axiom of \textit{population monotonicity} \cite[]{thomson1983fair} but it is actually equivalent to the combination of two axioms that appear frequently in the literature: \textit{endowment monotonicity} and \textit{consistency}.\footnote{The solidarity axiom we consider was called \textit{population-and-resource monotonicity} in the context of rationing problems \cite[]{chun1999equivalence}. See also \cite{moreno2006impartiality}.} The former says that when a bad or good shock changes the endowment of a group, all its members should share in the calamity or the windfall. Thus, it has obvious solidarity underpinnings and it has long been used in axiomatic work \cite[]{moulin1988can,moulin1992application,MORENOTERNERO2021682,BERGANTINOS2022338}. The latter says that if a sub-group of agents secedes with the endowment allocated to their members then the rule allocates the remaining endowment to the standing agents in the same way. As such, it has been referred as a ``robustness", often ``coherence", or ``operational" principle \cite[]{thomson1998consistency, balinski2005just, moreno2012common, thomson2019divide, GUDMUNDSSON2023851}, although solidarity underpinnings have also been provided \cite[]{thomson2012axiomatics}. Alternative forms of consistency have indeed been suggested as axioms of stability in related contexts \cite[]{harsanyi1959, lensberg1987stability, lensberg1988stability}.

\medskip

\noindent Our main result actually states that a sharing rule satisfies \textit{solidarity} if and only if it induces \textit{non-circular coalition formation problems}. These problems preclude the existence of \emph{rings} (which arise when there is a cycle of coalitions with at least one agent in the intersection of any two consecutive coalitions who prefers the latter to the former, while the rest of the intersection-mates do not have the opposite preference) and satisfy the notion of \textit{weak pairwise alignment} (if one agent in the intersection of two coalitions ranks them in one way, no other agent in the intersection ranks them in the opposite way). As the non-circular coalition formation problems have a stable partition, our result implies that requiring solidarity in the sharing rule guarantees \emph{stability} in the induced coalition formation problem in the sense that these problems will have a non-empty core. 

\medskip

\noindent The closest research to our work is \cite{gallo2018rationing}, which combines coalition formation and rationing problems \cite[]{o1982problem}. Specifically, they assume that agents have ``claims" over the endowments associated to coalitions. The endowments are not sufficient for claims to be fully honored and, thus, are rationed by a rule. This induces agents' preferences over coalitions. Within the domain of continuous rules, they show that the properties of endowment monotonicity and consistency guarantee the existence of stable partitions in the induced coalition formation problems.\footnote{\cite{gallo2018rationing} wrongly state that these properties are not only sufficient but also necessary.} We generalize that result to any resource allocation situation (i.e., not necessarily rationing) and without considering the prerequisite of continuity.

\medskip

\noindent Another closely related paper is \cite{pycia2012stability}. We shall be more precise about the connection once we formally introduce our result later in the text. But we mention at least now that \cite{pycia2012stability} analyzes a general model of coalition formation (including cases in which not all coalitions are feasible) and shows that, under two regularity conditions, there is a stable partition if agents' preferences are generated by a sharing rule imposing Nash bargaining \citep{nash1950bargaining} over coalitional outputs.\footnote{From a different perspective, \cite{lensberg1987stability} obtains the same functional form.}  In contrast, our result characterizes rules inducing stability without resorting to the regularity conditions, but only referring to the case in which all coalitions are feasible.

\medskip

\noindent Apart from the applications to rationing and bargaining problems mentioned above, we show that our result can be applied to other settings. We first analyze surplus sharing problems, closely related to rationing problems, and then we develop an application based on \cite{dietzenbacher2023fair} who introduce the problem of prize allocations in competitions where agents are ranked.%Applying our result, we show that, if prizes are divided among agents by a rule that satisfies solidarity, then it is guaranteed that the induced coalition formation problem has a stable partition. 
Finally, we show that our result can also give some insights to resource allocation situations in which some coalitions are not feasible.

\medskip

\noindent The rest of the paper is organized as follows. In Section 2, we introduce the preliminaries of the model (sharing problems and coalition formation problems). In Section 3, we present our benchmark analysis and result. In Section 4, we present applications of our result to several problems such as bargaining, rationing, surplus sharing or ranking problems. Finally, we conclude in Section 5. The proof of the main result has been relegated to the Appendix.

\section{Preliminaries}

\subsection{Sharing problems}

Let $N$ be a finite set of agents. Consider a situation where a coalition of agents $C \subseteq N$ has an endowment $E\in \mathbb{R}_{+}$. A \textbf{sharing problem} is a pair $(C, E)$. Let $\boldsymbol{\mathcal{P}}$ denote the class of such problems.

\medskip

\noindent An \textbf{allocation} for $(C, E) \in \mathcal{P}$ is a vector $x=(x_{i})_{i\in C}\in \mathbb{R}_{+}^{|C|}$ that satisfies non-negativity, $0 \leq x_i$ for each $i\in C$, and efficiency, $\sum\nolimits_{i\in C} x_i=E$. A \textbf {(sharing) rule} is a function $F$ defined on $\mathcal{P}$ that associates with each $(C, E) \in \mathcal{P}$ an allocation $F(C,E)$ for $(C,E)$. The payoff of agent $i$ in problem $(C,E)$ under rule $F$ is denoted by $F_i(C,E)$. We denote by $\boldsymbol{\mathcal{F}}$ the set of all rules. 

\medskip

\noindent We now introduce several axioms for rules.

\medskip

\noindent The first axiom states that small changes in the endowment of the problem do not lead to large changes in the chosen allocation. 

\medskip

\noindent \textbf{Endowment continuity}: For each problem $(C, E) \in {\cal P}$ and each sequence of endowments $\{E_j\}_{j=1}^{\infty}$ with $E_j \rightarrow E$, 
$$F(C, E_j) \rightarrow F(C, E).$$

\noindent The second axiom states that if the endowment increases, then each agent receives at least as much as she initially did.

\medskip

\noindent \textbf{Endowment monotonicity}: For each pair of problems $(C,E), \linebreak (C^{\prime},E^{\prime}) \in \mathcal{P}$, with $C = C^{\prime}$ and $E<E^{\prime}$, and each $i\in C$, $$F_i(C, E)\leq F_i(C^{\prime},E^{\prime}).$$

\noindent The third axiom states that if some agents leave a coalition with the payoffs assigned to them by the rule, and the situation is reassessed at that point, then each remaining agent receives the same payoff as she initially did.

\medskip

\noindent \textbf{Consistency}: For each problem $(C, E) \in \mathcal{P}$, each $C' \subset C$, and each $i\in C'$,
$$F_i(C', \sum\limits_{i\in C'} F_i(C,E)) = F_i(C, E).$$

\noindent We finally introduce the axiom of solidarity: the possible arrival of new agents (with or without the endowment changing) does not affect the incumbent agents in opposite directions.

\medskip

\noindent \textbf{Solidarity}: For each pair of problems $(C, E), (C', E') \in \mathcal{P}$, with $C \subseteq C'$, and each pair $i, j \in C$, 
$$F_i (C, E) > F_i(C',E') \Rightarrow F_j(C, E) \geq F_j(C', E').$$

\noindent The next lemma states the relations between the previous axioms.

\begin{lemma}
\label{rm+c} The following statements hold:
\begin{itemize}
\item[(i)] If a rule satisfies \textit{endowment monotonicity}, then it also satisfies \textit{endowment continuity}.
\item[(ii)] A rule satisfies \textit{solidarity} if and only if it satisfies \textit{endowment monotonicity} and \textit{consistency}.
\end{itemize}
\end{lemma}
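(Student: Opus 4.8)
The plan is to prove the two statements separately, with the efficiency identity $\sum_{i\in C}F_i(C,E)=E$ serving as the common engine in every step.

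For the first statement I would fix a coalition $C$ and study each coordinate map $E\mapsto F_i(C,E)$ on $\mathbb{R}_{+}$. \emph{Endowment monotonicity} makes each such map non-decreasing (and non-negative), so each admits a left and a right limit at every point; write these as $F_i(C,E_0^{-})$ and $F_i(C,E_0^{+})$. Efficiency says the sum of these maps is the identity $E\mapsto E$, which is continuous, so its jump at any $E_0$ is $0$. Hence the total jump $\sum_{i\in C}\big(F_i(C,E_0^{+})-F_i(C,E_0^{-})\big)$ equals $0$; since every summand is non-negative, each must vanish, forcing $F_i(C,E_0^{-})=F_i(C,E_0)=F_i(C,E_0^{+})$ for every $i$. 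Thus each coordinate is continuous, which gives \emph{endowment continuity}.

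For the ``if'' part of the second statement I would derive both axioms from \emph{solidarity}. To obtain \emph{endowment monotonicity}, fix $C$ and $E<E'$ and suppose for contradiction that $F_i(C,E)>F_i(C,E')$ for some $i$; specializing \emph{solidarity} to a fixed coalition yields $F_j(C,E)\ge F_j(C,E')$ for every $j\in C$, and summing over $C$ contradicts $E<E'$ by efficiency. To obtain \emph{consistency}, set $\hat E=\sum_{k\in C'}F_k(C,E)$ and compare $(C',\hat E)$ with $(C,E)$; the differences $d_k=F_k(C',\hat E)-F_k(C,E)$ satisfy $\sum_{k\in C'}d_k=0$ by efficiency. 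The key observation is that \emph{solidarity} is equivalent to forbidding two incumbents from moving in opposite directions, so the $d_k$ cannot carry mixed strict signs; being of one weak sign and summing to zero, they all vanish, which is exactly \emph{consistency}. For the ``only if'' part I would start from $(C,E),(C',E')$ with $C\subseteq C'$ and the hypothesis $F_i(C,E)>F_i(C',E')$. Using \emph{consistency} on $(C',E')$ with subcoalition $C$, I rewrite $F_k(C',E')=F_k(C,\hat E)$ for all $k\in C$, where $\hat E=\sum_{k\in C}F_k(C',E')$. The hypothesis becomes $F_i(C,E)>F_i(C,\hat E)$, so \emph{endowment monotonicity} rules out $E\le\hat E$ and forces $E>\hat E$; a second appeal to \emph{endowment monotonicity} then gives $F_j(C,E)\ge F_j(C,\hat E)=F_j(C',E')$ for every $j\in C$, which is \emph{solidarity}.

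I expect the only genuine subtlety to be the reconciliation of the population change built into \emph{solidarity} with the fixed-population content of \emph{endowment monotonicity}. This is resolved by reading \emph{solidarity} as the symmetric statement that no two incumbents move in opposite directions, and by repeatedly invoking the efficiency identity to pin down the common direction. The monotone-jump decomposition in the first statement and this sign-plus-sum argument are the only non-mechanical points; everything else is routine bookkeeping, which is presumably why the authors omit the proof.
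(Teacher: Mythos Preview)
The paper omits the proof of this lemma entirely, labeling it standard and citing \cite{chun1999equivalence} for the second statement, so there is no in-paper argument to compare against. Your argument is correct and is exactly the standard one: the jump decomposition of a finite family of monotone functions summing to the identity for the first part, and the reduction via \emph{consistency} to a fixed-coalition comparison followed by \emph{endowment monotonicity} (and conversely the sign-plus-efficiency argument) for the second. The subtlety you flag is genuine and your resolution is the right one: the lemma requires reading the ``$C\subset C'$'' in the solidarity axiom as permitting $C=C'$---indeed, under a strict reading a two-agent society gives an immediate counterexample to ``solidarity $\Rightarrow$ endowment monotonicity''---and once that reading is in place your symmetric ``no two incumbents move in strictly opposite directions'' formulation is exactly what is needed.
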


\begin{proof}
$(i)$ Let $F$ be a rule satisfying \textit{endowment monotonicity}.
Let $(C, E) \in {\cal P}$ and let $\{E_j\}_{j=1}^{\infty}$ be a sequence of endowments with $E_j \rightarrow E$. 
Then, for each $j$, $|E_j-E|=|\sum_{i\in C}F_i(C, E_{j})-\sum_{i\in C}F_i(C, E)|$. And, by \textit{endowment monotonicity}, for each $j$, 

$$|\sum_{i\in C}F_i(C, E_{j})-\sum_{i\in C}F_i(C, E)|=\sum_{i\in C}|F_i(C, E_{j})-F_i(C, E)|.$$

\noindent Thus, for each $i\in C$,  $|F_i(C, E_{j})-F_i(C, E)|\leq |E_j-E|$. As $|E_j-E|\to 0$, it follows that for each $i\in C$, $|F_i(C, E_{j})-F_i(C, E)|\to 0$. Equivalently, for each $i\in C$, $F_i(C, E_{j})$ converges to $F_i(C, E)$.

\medskip

\noindent $(ii)$ Assume first that $F$ is a rule that satisfies \textit{solidarity}. Then, it is straightforward to see that it also satisfies \textit{endowment monotonicity}. As for \textit{consistency}, let $(C, E) \in \mathcal{P}$ and $C' \subset C$. Then, by \textit{solidarity}, either for each $i\in C'$, $F_i (C, E) \geq F_i(C', \sum_{i \in C'} F_i(C, E))$, or for each $i\in C'$, $F_i (C, E) \le F_i(C', \sum_{i \in C'} F_i(C, E))$. Thus, for each $i\in C'$, $F_i (C, E) = F_i(C', \sum_{i \in C'} F_i(C, E))$, as desired. 

\medskip

\noindent Conversely, assume that $F$ is a rule that satisfies \textit{endowment monotonicity} and \textit{consistency}. Let $(C, E), (C', E') \in \mathcal{P}$ be a pair of problems with $C \subseteq C'$. By \textit{endowment monotonicity}, it follows that if $C = C'$ and $E \geq E'$, then for each $i \in C$, $F_i(C, E) \geq F_i(C', E')$, whereas if $C = C'$ and $E \leq E'$, then for each $i \in C$, $F_i(C, E) \leq F_i(C', E')$. Thus, either way, \textit{solidarity} holds. Assume next that $C \subset C'$. By \textit{consistency}, for each $i\in C$,
$$F_i (C, \sum_{i\in C} F_i(C', E')) = F_i(C', E').$$ 
%then \textit{endowment monotonicity} implies that either for each $i \in C$, $F_i(C, E) \geq F_j(C, E')$ (when $E \geq E'$), or for each $i \in C$, $F_i(C, E) \leq F_j(C, E')$ (when $E \leq E'$).  

\noindent By \textit{endowment monotonicity}, it follows that if $E \le \sum_{i\in C} F_i(C',E')$, then for each $i\in C$, $F_i(C, E) \le F_i( C, \sum_{i\in  C} F_i( C', E')) = F_i( C', E')$, whereas if $E\ge \sum_{i\in C}F_i( C', E')$, then for each $i\in C$, $F_i( C, E)\ge F_i( C, \sum_{i\in C} F_i( C', E'))= F_i( C', E')$. Thus, either way, \textit{solidarity} holds.
\end{proof}

\subsection{Coalition formation problems}

Consider a situation where each agent ranks the coalitions that she may belong to. Formally, let $N$ be a finite set of agents and $C\subseteq N$ denote a coalition. The collection of non-empty coalitions is denoted by $2^{N}$. For each agent $i\in N$, let $\succsim_{i}$ be a complete and transitive preference relation over coalitions containing $i$. Given $C,C^{\prime}\subseteq N$ such that $i\in C\cap C^{\prime}$, $C\succsim_{i}C^{\prime}$ means that agent $i$ finds coalition $C$ at least as desirable as coalition $C^{\prime}$. The binary relations $\succ_i$ and $\sim_i$ are defined as usual. A \textbf{(hedonic) coalition formation problem} is just a preference profile that consists of a list of preference relations, one for each $i\in N$, $\succsim=(\succsim_{i})_{i\in N}$. Let $\boldsymbol{\mathcal{D}}$ denote the class of such problems. 

\medskip

\noindent A partition is a set of non-empty coalitions whose union is $N$ and whose pairwise intersections are empty. Formally, a \textbf{partition} is a list $\pi=\{C_{1},\ldots, C_{m}\}$ such that \emph{(i)} for each $l=1, \ldots, m$, $C_l \neq \emptyset$, \emph{(ii)} $\bigcup_{l=1}^m C_l = N$, and \emph{(iii)} for each pair $l,l^{\prime} \in \{1, \ldots, m\}$, with $l \neq l^{\prime}$, $C_{l}\cap C_{l^{\prime}}=\emptyset$. Let $\boldsymbol{\Pi}$ denote the set of all partitions. For each $\pi \in \Pi$ and each $i\in N$, let $\boldsymbol{\pi(i)}$ denote the coalition in $\pi$ that contains agent $i$. A partition $\pi \in \Pi$ is \textbf{stable for $\boldsymbol{\succsim}$} if there is no coalition $T \subseteq N$ such that for each $i\in T$, $T \succ_{i} \pi(i)$. The set of all stable partitions for $\succsim$ is the \textbf{core} of $\boldsymbol{\succsim}$. The literature on coalition formation mostly focuses on identifying properties of the preference profiles that guarantee the existence of stable partitions.

\medskip

\noindent We now introduce several concepts and properties defined for coalition formation problems. First, a \textbf{\textit{ring}} is an ordered list of coalitions $(C_0, C_1, \ldots, C_{l-1})$, with $l>2$, such that for each subscript $k=0, 1, \ldots, l-1$ (modulo $l$) and each $j \in C_{k} \cap C_{k+1}$, $C_{k+1} \succsim_j C_{k}$, with at least one agent with strict preference in each intersection.\footnote{See, for instance, \cite{inal2015core} and \cite{pycia2012stability} for different definitions of rings, under the name of cycles.}  That is, in a ring there is at least one agent in the intersection of any two consecutive coalitions who prefers the latter to the former, while the rest of the intersection-mates are indifferent. It can be easily checked that the lack of rings guarantees the existence of a stable partition.

\medskip

\noindent The next property, originally introduced by \cite{pycia2012stability}, requires that all agents in the intersection of two coalitions rank them in the same way.

\medskip

\noindent \textbf{Pairwise alignment}: A coalition formation problem $\succsim \in {\cal D}$ is pairwise aligned if for each pair $C, C' \subseteq N$ and each pair $i,j\in C\cap C^{\prime}$, then [$C\succsim _{i}C^{\prime} \Leftrightarrow C\succsim _{j}C^{\prime}$].

\medskip

\noindent The \textbf{common ranking property}, introduced by \cite{farrell1988partnerships}, states that there is a common ranking of all coalitions that agrees with agents' preferences. Formally, a coalition formation problem satisfies the common ranking property if there is an ordering $\succsim$ over $2^{N}$ such that for each $i\in N$ and each pair $C,C'\subseteq N$ with $i\in C\cap C'$, $C\succsim_{i}C'\Leftrightarrow C\succsim C'$. Note that the common ranking property precludes the formation of rings. %\footnote{See \cite{banerjee2001core} and \cite{bogomolnaia2002stability} for other sufficient conditions to guarantee stability and \cite{iehle2007core} for a necessary and sufficient condition.} 
 In addition, when all coalitions are feasible, the common ranking property coincides with pairwise alignment (see Footnote 6 in \cite{pycia2012stability} for more details).  
\medskip

\noindent A weakening of the pairwise alignment property, introduced by \cite{gallo2018rationing}, requires that if one agent in the intersection of two coalitions ranks them in one way, no other agent in the intersection ranks them in the opposite way. 

\medskip

\noindent \textbf{Weak pairwise alignment}: A coalition formation problem $\succsim\in\mathcal{D}$ is weakly pairwise aligned if for each pair $C,C^{\prime} \subseteq N$ and each pair $i,j\in C\cap C^{\prime}$, then $[C\succ _{i}C^{\prime}\Rightarrow C\succsim_{j}C^{\prime}]$.

\medskip

\noindent Note that, unlike pairwise alignment, \textit{weak pairwise alignment} allows one agent to have a strict preference over two coalitions while any other agent in the intersection is indifferent between them. 

\medskip

%\bigskip

\noindent The class of coalition formation problems that satisfy \textit{weak pairwise alignment} and do not have \textit{rings} is dubbed \emph{\textbf{non-circular coalition formation problems}} by \cite{gallo2018rationing}. This class includes the problems that satisfy the common ranking property (the proof is straightforward). It is also related to the class of problems that satisfy the \textbf{top-coalition property} \citep[see][]{banerjee2001core}. Formally, a coalition $C'\subseteq C$ is a top coalition of $C$ if for each $i\in C' $ and each $S\subseteq C$ with $i\in S$, we have $C'\succsim _{i}S$. A coalition formation problem satisfies the top-coalition property if each coalition $C\subseteq N$ has a top coalition. The non-circular coalition formation problems are included in the class of problems satisfying the top-coalition property \citep[see Theorem 1 in][]{gallo2018rationing}. Likewise, a weaker version of the top-coalition property guarantees the existence of a stable partition and, in consequence, so does the top-coalition property \citep[see Theorem 1 in][]{banerjee2001core}. To complete the relations between the properties, note that Examples 1 and 2 in \cite{gallo2018rationing} show that the top-coalition property does neither imply \textit{weak pairwise alignment} nor preclude the existence of \textit{rings}. Finally, Example 3 in \cite{gallo2018rationing} shows that \textit{weak pairwise alignment} does not guarantee the existence of a stable partition. All these relations among the properties are illustrated in Figure \ref{figure1}.

\begin{figure}[h]
\centering

\includegraphics[width=0.9\textwidth]{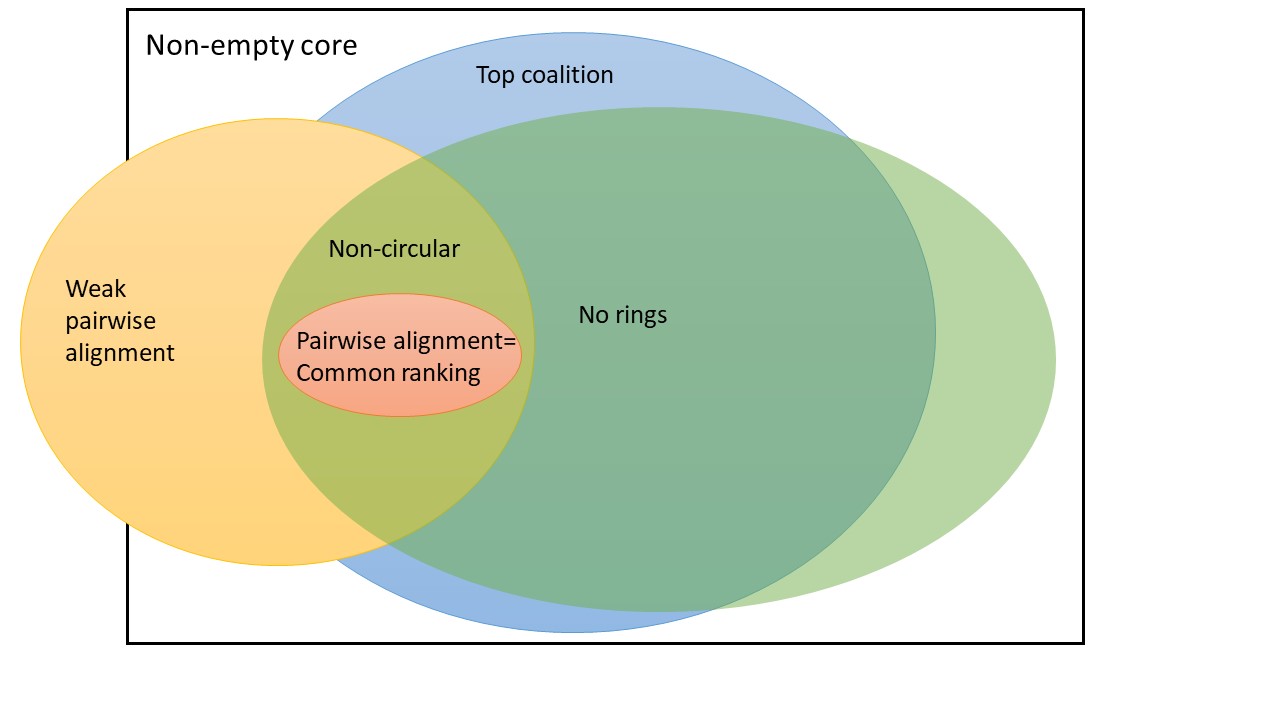}
\vspace{-1cm}\centering\caption{Relations among properties.}\label{figure1}
\end{figure}

\section{Benchmark analysis}\label{Section 3}

\noindent Given a set of sharing problems, one for each coalition, and a sharing rule, a coalition formation problem can be induced as follows: each agent computes her payoff in each problem with the sharing rule, and ranks coalitions accordingly. Formally, given a set of problems $\{(C, E_C)\}_{C \subseteq N, E_C \in \mathbb{R}_+}$, the \textbf{coalition formation problem induced by rule $\boldsymbol{F}\in \mathcal{F}$} is the list of preference relations $\succsim^{F} = (\succsim_{i}^{F})_{i\in N}$ defined as follows: for each $i\in N$, and each pair $C,C^{\prime} \subseteq N$ such that $i\in C\cap C^{\prime}$, $C \succsim_{i}^{F} C^{\prime}$ if and only if $F_i(C, E_{C}) \geq F_i(C^{\prime}, E_{C^{\prime}})$. 

\medskip

\noindent Our main result characterizes all rules that induce non-circular coalition formation problems. They happen to be those that satisfy the solidarity axiom. The proof can be found in the Appendix.%and, as a consequence, establishes the properties the rules have to satisfy to be stable.

\begin{theorem}
\label{stable} A sharing rule $F$ satisfies solidarity if and only if for any set of sharing problems $\{(C, E_C)\}_{C \subseteq N, E_C \in \mathbb{R}_+}$, the coalition formation problem induced by $F$, $\succsim^F$, is non-circular.
\end{theorem}

\noindent The next result follows from Theorem 1 and the relations among properties presented above (see Figure 1).

\begin{corollary}\label{stable1}
If a sharing rule $F$ satisfies \textit{solidarity}, then for any set of sharing problems $\{(C, E_C)\}_{C \subseteq N, E_C \in \mathbb{R}_+}$, the core of the coalition formation problem induced by $F$, $\succsim^F$, is non-empty.
\end{corollary}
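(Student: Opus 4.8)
The plan is to derive the conclusion as a short chain of implications, each link already supplied either by Theorem~\ref{stable} or by the relations among properties recorded in Section~2. The hypothesis on $F$ enters only once, at the very start: since $F$ satisfies \emph{solidarity}, Theorem~\ref{stable} tells us that for the given set of sharing problems the induced profile $\succsim^F \in \mathcal{D}$ is non-circular, i.e.\ it is weakly pairwise aligned and admits no ring. From this point the argument is purely about the preference profile $\succsim^F$ and makes no further reference to the rule.

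The second link is the inclusion recalled in Section~2 (Theorem~1 in \cite{gallo2018rationing}): every non-circular coalition formation problem satisfies the \emph{top coalition property}. Hence each coalition $C \subseteq N$ has a top coalition under $\succsim^F$. With this in hand I would produce a concrete stable partition by the standard recursive construction of \cite{banerjee2001core}: put $N_1 := N$ and let $C_1$ be a top coalition of $N_1$; inductively, while $N_k := N \setminus (C_1 \cup \cdots \cup C_{k-1})$ is non-empty, let $C_k$ be a top coalition of $N_k$, which exists because the top coalition property holds for \emph{every} subset of $N$. As $N$ is finite this halts and yields a partition $\pi = \{C_1, \ldots, C_m\}$ of $N$.

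The last step is to verify that $\pi$ lies in the core, and this is the only part requiring genuine verification. Suppose for contradiction that some $T \subseteq N$ blocks $\pi$, so $T \succ^F_i \pi(i)$ for every $i \in T$. Let $k$ be the smallest index with $C_k \cap T \neq \emptyset$ and pick $i \in C_k \cap T$; by minimality $T$ is disjoint from $C_1 \cup \cdots \cup C_{k-1}$, hence $T \subseteq N_k$. Since $C_k$ is a top coalition of $N_k$ and $i \in C_k$, the defining property gives $C_k \succsim^F_i T$, contradicting $T \succ^F_i C_k = \pi(i)$. Thus no blocking coalition exists and the core of $\succsim^F$ is non-empty. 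The delicate point in this argument—and the place where I would be most careful—is selecting $i$ in the \emph{earliest-formed} coalition meeting $T$: this is exactly what forces $T \subseteq N_k$ and lets the top-coalition property rule out the block; the first two links are immediate appeals to the quoted results.
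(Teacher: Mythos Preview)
Your argument is correct and follows exactly the route the paper indicates: the paper's own justification is the one-line remark that the corollary ``follows from Theorem~\ref{stable} and the relations among properties presented above (see Figure~\ref{figure1}),'' and you have simply unpacked that chain (solidarity $\Rightarrow$ non-circular $\Rightarrow$ top coalition property $\Rightarrow$ non-empty core via the \cite{banerjee2001core} construction). The only difference is that you spell out the recursive top-coalition construction explicitly, whereas the paper relies on the cited references for that final step.
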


\medskip

\noindent Corollary \ref{stable1} implies that, when the sharing rule satisfies solidarity, then stability is guaranteed. These results are illustrated in Figure \ref{figure2}.

\begin{figure}[h]
\centering

\includegraphics[width=0.7\textwidth]{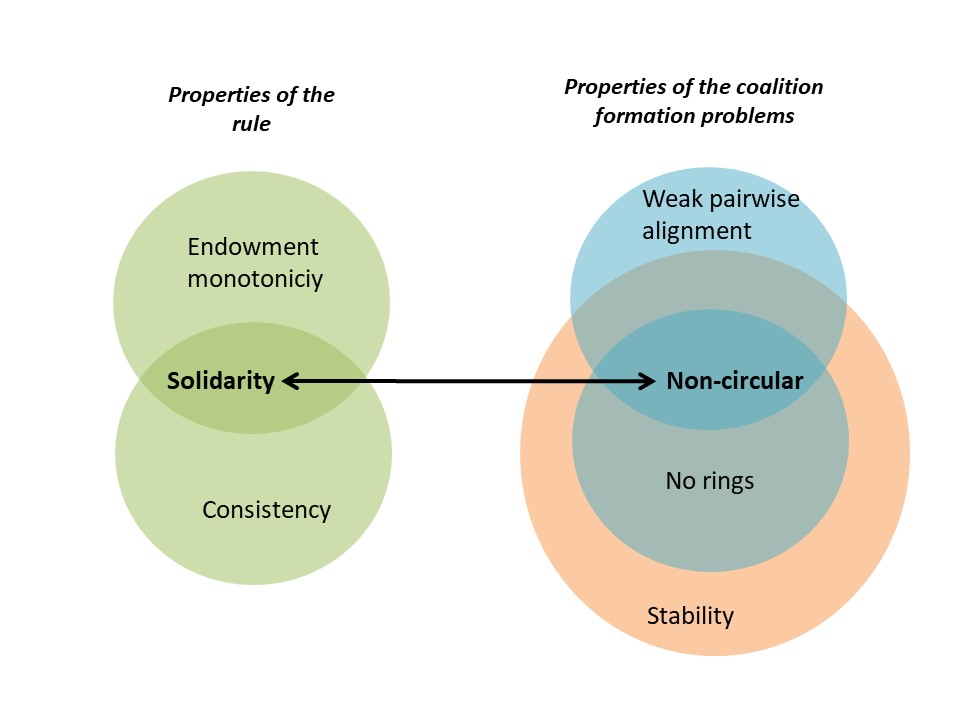}\caption{Sharing rules and their induced coalition formation problems.}\label{figure2}
\end{figure}

\medskip

\noindent Theorem \ref{stable} and Corollary \ref{stable1} are related to \cite{pycia2012stability}. On the one hand, note that our results apply to resource allocation situations generating hedonic coalition formation problems (i.e., all elements of $2^N$ are feasible coalitions) while his results also apply to other situations where not all coalitions are feasible, such as matching problems. Nevertheless, we show in Section 4 that our results can also give some insights for any resource allocation situation where only some coalitions are feasible.

\medskip

\noindent On the other hand, \cite{pycia2012stability} shows that pairwise alignment is a necessary and sufficient condition to guarantee a non-empty core in all the situations he analyzes (his Theorems 1 and 2). In contrast, by weakening pairwise alignment and adding absence of rings, we define the non-circularity property, which is weaker than pairwise alignment and it is sufficient to guarantee a non-empty core in our setting (where all coalitions are feasible). Consequently, this may lead to differences between the sharing rules that yield stability in the induced coalition formation problems in \cite{pycia2012stability} and those that do so in our setting. More precisely, \cite{pycia2012stability} characterizes the sharing rules obeying strict endowment monotonicity\footnote{Formally, for each pair of sharing problems $(C,E), (C,E^{\prime}) \in \mathcal{P}$, with $E<E^{\prime}$, and each $i\in N$, $F_i(C, E) < F_i(C,E^{\prime})$.} and non-satiability\footnote{Formally, for each $C \subseteq N$ and each $i \in C$, $\lim_{E \rightarrow \infty} F_i(C, E) = \infty$.} that guarantee that, for any of his resource allocation situations, all induced coalition formation problems have a non-empty core (his Corollary 1).\footnote{He also includes the axiom of \textit{endowment continuity}, but this axiom is implied by \textit{strict endowment monotonicity}, thanks to our Lemma \ref{rm+c} presented above.} However, we characterize the sharing rules that generate non-circular coalition formation problems by the axiom of solidarity (our Theorem \ref{stable}). Therefore, those sharing rules guarantee stability in the resource allocation situations where all coalitions are feasible (our Corollary \ref{stable1}). The above are not minor technical differences because, as we shall illustrate in the following example, some interesting sharing rules inducing stability in the situations where all coalitions are feasible are covered by our result, but not by Pycia's one.
\medskip

\begin{example}
\label{ex1}
\noindent Consider a sharing rule reflecting a situation in which an agent has priority over the rest of the agents. Moreover, she receives in each non-singleton coalition at most $k$ units of the endowment, while this bound does not apply for the other agents.
%she has a satiation level of $k$ units of the endowment, while the rest of the society is never satiated. 
Then, $F$ allocates the first $k$ units of the endowment of any coalition to this agent (if she is in the coalition), while the remaining units of the endowment (if any are left) are shared equally among the other agents within the coalition. If the prioritized agent is not in the coalition, the rule simply imposes equal sharing of the endowment among all coalition members. 
  
\medskip

\noindent Formally, let $N = \{1,\ldots, n\}$ and consider the following sharing rule:

$$
F_i(C, E_C)=
\left\{
\begin{array}{cc}
\min\{E_C, k\} & \mbox{ if } i=1 \in C \, and \, C\neq\{1\},\\
\frac{E_C - \min\{E_C, k\}}{|C| - 1} & \mbox{ if } 1 , i \in C\, and\, i \neq 1 ,\\%1 \neq i \in C
\frac{E_C}{|C|} & \mbox{ otherwise. }
\\
\end{array}
\right.
$$

\noindent It can be checked that this rule satisfies \textit{solidarity}. Then, by our results, it also guarantees stability. We illustrate this by showing the existence of a stable partition for a particular specification. Let $N=\{1,2,3,4\}$, $k=10$, and the distribution of endowments %$\vec{E} = \{E_C\}_{C \subseteq N}$ 
be defined as follows:

\begin{table}[H]
\centering
\begin{tabular}{|c|c|c|c|c|c|c|c|c|}
\hline
$C$ & $\{12\}$ & $\{13\}$ & $\{14\}$ & $\{123\}$ & $\{124\}$ & $\{134\}$ &  $\{1234\}$ & otherwise
\\ \hline
$E_C$ & $10$ & $12$ & $14$ & $16$ & $20$ & $24$ &  $28$ & $0$ \\ \hline
\end{tabular}%
\end{table}

%\medskip

\noindent Then, $F$ yields the following allocations:

\smallskip
%\pagebreak

\small\begin{table}[h!]
	\centering
		\begin{tabular}{|c|c|c|c|c|c|c|c|c|}
			\hline
			$C$ &  $\{12\}$&  $\{13\}$ & $\{14\}$ & $\{123\}$ & $\{124\}$ & $\{134\}$& $\{1234\}$ & otherwise \\ 
			\hline
		$F(C, E_C)$ &$(10,0)$	&$(10,2)$ & $(10,4)$ & $(10,3,3)$ & $(10,5,5)$ & $(10,7,7)$ & $(10,6,6,6)$ & $(0)_{i\in C}$\\ \hline
		\end{tabular}
	
\end{table}

%\smallskip

\noindent As a consequence, the coalition formation problem induced by $F$, $\succsim^{F}$, is the following:

\newpage

\begin{table}[h!]
\centering
\begin{tabular}{cccc}

 $\mathbf{1}$ & $\mathbf{2}$ & $\mathbf{3}$ & $\mathbf{4}$ \\ \hline
 $12\sim 13 \sim 14\sim$ & $1234$ & $134$ & $134$ \\ 
$\sim 123\sim 124\sim$ & $124$ & $1234$ & $1234$ \\ 
 $\sim 134\sim 1234$ & $123$ & $123$ & $124$\\ 
 $1$ &  $2 \sim 12 \sim 23 \sim$  & $13$ & $14$ \\ 
   & $\sim 24 \sim 234$ & $3\sim 23\sim$  & $4\sim 24\sim$ \\
   &   & $\sim34\sim 234$  & $\sim 34\sim234$ \\
\end{tabular}

\end{table}
\smallskip

\noindent Note that partitions $\{\{134\},\{2\}\}$ and $\{\{1234\}\}$ are stable. In this example $\succsim^{F}$ does not satisfy pairwise alignment; for instance, $\{124\} \succ_2^{F} \{123\}$, whereas $\{124\} \sim_1^{F} \{123\}$. However, it satisfies weak pairwise alignment and it has no rings, i.e., it is a non-circular problem. The reason why this rule is not included in Pycia's results is that it neither satisfies strict \textit{endowment monotonicity} nor \textit{non-satiability}.
\end{example}\medskip

\noindent The next example shows that, although the \textit{solidarity} axiom is sufficient to guarantee stability in the induced coalition formation problems, it is not necessary.
\medskip

\begin{example}
\label{ex2}
\noindent Consider a variant of Example \ref{ex1}, reflecting a situation in which agent $1$ has priority over the rest of the agents, but only when the grand coalition is formed. Moreover, she receives in the grand coalition at most $k$ units of the endowment, as before, while this bound does not apply
%limit does not operate  
in other coalitions or for other agents. Then, $F$ allocates the first $k$ units of the endowment of the grand coalition to agent $1$, while the remaining units of the endowment (if any are left) are shared equally among the other agents. If the coalition is not the grand coalition, the rule simply imposes equal sharing of the endowment among all coalition members.\medskip

\noindent Formally, let $N = \{1,\ldots, n\}$ and consider the following sharing rule:

$$
F_i(C, E_C)=
\left\{
\begin{array}{cc}
\min\{E_C, k\} & \mbox{ if } i=1 \, and\, C=N,\\
\frac{E_C - \min\{E_C, k\}}{|C| - 1} & \mbox{ if }  i\neq 1 and\, C=N,\\
\frac{E_C}{|C|} & \mbox{ otherwise}.\\
\end{array}
\right.
$$

\noindent It can be checked that this rule does not satisfy \textit{solidarity}; in particular, it is not consistent. However, it never generates rings and, therefore, it induces coalition formation problems with a non-empty core. To see this, note that the rule imposes equal sharing for each coalition $C \neq N$. Thus, all agents agree on the ranking of all subcoalitions of $N$. Then, the common ranking property restricted to all subcoalitions of $N$ is satisfied.
%before considering coalition $N$. 
Finally, it can be easily checked that, under this rule, bringing coalition $N$ into the picture does not generate any ring.

%Then, adding coalition $N$ to the preference profile cannot generate a ring (as this would require at least three coalitions).

\medskip

\noindent We illustrate this rule for a particular specification. Let $N=\{1,2,3\}$, $k=6$, and the distribution of endowments be defined as follows:

%\newpage
\begin{table}[h!]
\centering
\begin{tabular}{|c|c|c|c|c|c|}
\hline
$C$ & $\{12\}$ & $\{13\}$ & $\{23\}$ & $\{123\}$ & otherwise
\\ \hline
$E_C$ & $10$ & $8$ & $6$ & $15$  & $0$\\ \hline
\end{tabular}%
\end{table}
\medskip

\noindent Then, $F$ yields the following allocations:
\smallskip

\begin{table}[h!]
\centering	
		\begin{tabular}{|c|c|c|c|c|c|}
			\hline
			$C$ &  $\{12\}$&  $\{13\}$ & $\{23\}$ & $\{123\}$ & otherwise \\ 
			\hline
		$F(C, E_C)$ &$(5,5)$	&$(4,4)$ & $(3,3)$ & $(6,4.5,4.5)$ & $(0)_{i\in C}$ \\ \hline
		\end{tabular}
	
\end{table}

\smallskip

\noindent As a consequence, the coalition formation problem induced by $F$, $\succsim^{F}$, is the following:

\smallskip

\begin{table}[H]
\centering
\begin{tabular}{ccc}

$\mathbf{1}$ & $\mathbf{2}$ & $\mathbf{3}$  \\ \hline
 $123$  & $12$ & $123$ \\ 
  $12$  & $123$ & $13$ \\ 
   $13$  & $23$ & $23$ \\ 
    $1$  & $2$ & $3$ \\ 
\end{tabular}%

\end{table}

\smallskip

\noindent Note that partitions $\{\{12\},\{3\}\}$ and $\{\{123\}\}$ are stable. %It is easy to see that there is no ring. 
However, $\succsim^{F}$ does not satisfy \textit{weak pairwise alignment} as $\{123\} \succ_1^{F} \{12\}$, whereas $\{12\} \succ_2^{F} \{123\}$.\medskip

\end{example}

\section{Applications and related work}

In many economic models, agents are characterized by different features (such as utility functions, claims, or ranking positions) that could be taken into account to distribute a given endowment among them. Our results show that, regardless of these characteristics, as long as the sharing rule satisfies solidarity, it will induce a coalition formation problem with a non-empty core. In this section, we first relate our results to existing results for bargaining problems \citep{pycia2012stability} and rationing problems \citep{gallo2018rationing}, and we then develop two novel applications for surplus sharing problems and ranking problems. Finally, we illustrate how our results can be applied to resource allocation situations with permissible coalitions.

\subsection*{Bargaining problems}

\noindent We first consider the case of coalition formation in bargaining problems introduced by \cite{pycia2012stability}. In this model, agents have utility functions which may be taken into account by the rule to get the final allocations. Formally, let $N = \{1, \ldots, n\}$ and let $u_{i} : \mathbb{R}_{+} \rightarrow \mathbb{R}_{+}$ denote agent $i$'s (non-decreasing) utility function for each $i\in N$. For each $C\subseteq N$, let $u_C=(u_{i})_{i\in C}$ be the profile of utility functions of coalition $C$ and $E_C\in\mathbb{R}_{+}$ be the endowment of coalition $C$. A \textbf{bargaining problem} is a triple $(C,E_C,u_C)$. An allocation for $(C,E_C,u_C)$ is a vector $x=(x_{i})_{i\in C} \in \mathbb{R}^{C}_{+}$ such that $\sum_{i\in C}x_{i}=E_C$. A \textbf{(sharing) rule} $F$ is a function that associates with each problem $(C,E_C,u_C)$ an allocation. Given a set of bargaining problems $\{(C,E_C,u_C)\}_{C\subseteq N, E_C \in \mathbb{R}_+}$, the \textbf{coalition formation problem induced by rule $\boldsymbol{F}$}, $\succsim^F$, is defined as in Section 3. 

\medskip

\noindent The two focal rules in this model are the so-called Nash bargaining solution, $N$, and Kalai-Smorodinsky bargaining solution, $KS$. The first one maximizes the product of agents' utilities \citep[see][]{nash1950bargaining}. The second one equalizes the relative gains - the gain of each player relative to its maximum possible gain - and maximizes this equal value \citep[see][]{kalai1975other}. Formally, for each bargaining problem $(C,E_C,u_C)$,
$$N(C,E_C,u_C)=\arg\max \prod\limits_{i\in C} u_i(x_i),$$
and $KS(C,E_C,u_C)$ is such that for each pair $i,j\in C$,
$$\frac{u_i(KS_i(C,E_C,u_C))}{u_i(E_C)}=\frac{u_j(KS_j(C,E_C,u_C))}{u_j(E_C)}.$$

\noindent \cite{pycia2012stability} shows that the Nash bargaining solution guarantees a non-empty core of the induced coalition formation problem. As the Nash bargaining solution satisfies solidarity, we know from our Theorem \ref{stable} that it induces non-circular coalition formation problems and, thus, stability is guaranteed. \cite{pycia2012stability} also shows that the core of the induced coalition formation problem from the Kalai-Smorodinsky bargaining solution can be empty for some coalitional endowments. As the Kalai-Smorodinsky bargaining solution fails to satisfy solidarity, we know from our Theorem \ref{stable} that it does not always induce non-circular coalition formation problems.

\subsection*{Rationing problems}

\noindent As we mentioned in the Introduction, our results generalize those obtained by \cite{gallo2018rationing} to any resource allocation situation (beyond rationing problems).
%those introduced by \cite{gallo2018rationing} only for rationing problems. 
We analyze in this subsection how our results apply to that particular case of rationing problems. These problems pertain to situations where agents have claims over the endowment that cannot be fully honored and sharing rules take those claims into account to yield the allocations.\footnote{This model is renamed as generalized claims problems by \cite{gallo2022stable}.} Formally, let $N = \{1, \ldots, n\}$ and let $d_{i}\in\mathbb{R}_{+}$ denote agent $i$'s claim for each $i\in N$. For each $C\subseteq N$, let $d_C=(d_{i})_{i\in C}$ be the claims vector of coalition $C$ and $E_C\in\mathbb{R}_{+}$, with $E_C\leq\sum_{i\in C}d_i$, be the endowment of coalition $C$. A \textbf{rationing problem} is a triple $(C,E_C,d_C)$. An allocation for $(C,E_C,d_C)$ is a vector $x=(x_{i})_{i\in C}\in \mathbb{R}^{C}_{+}$ such that, for each $i\in C$, $x_i\leq d_i$ and $\sum_{i\in C}x_{i}=E_C$. A \textbf{(sharing) rule} $F$ is a function that associates with each problem $(C,E_C,d_C)$ an allocation. Given a set of claims problems $\{(C,E_C, d_C)\}_{C\subseteq N, 0 \leq E_C \leq \sum_{i \in C} d_i}$, the \textbf{coalition formation problem induced by rule $\boldsymbol{F}$},  $\succsim^F$, is defined as in Section 3.

\medskip

\noindent A focal family of sharing rules for rationing problems is the so-called family of \textit{parametric rules} \citep[]{young1987dividing}. The payoff of each agent given by a parametric rule is obtained by a function that only depends on her individual claim and a common parameter. Formally, a rule $S$ is \textit{parametric} if there exists a function $f:[a,b]\times \mathbb{R}_{+}\rightarrow \mathbb{R}_{+},$ where $[a,b]\subset \mathbb{R}\cup \{\pm \infty \},$ continuous and weakly monotonic in its first argument, such that:
\begin{itemize}
    \item[($i$)] $S_{i}(C, E_C, d_C)=f(\lambda, d_{i})$ for each problem $(C,E_C, d_C)$ and for some $\lambda \in \lbrack a,b]$, and
    \item[($ii$)] $f(a,x)=0$ and $f(b, x) = x$ for all $x\in \mathbb{R}_{+}$.
\end{itemize}

\noindent The family of parametric rules includes some well-known rules such as the proportional rule, $P$, the constrained equal-awards rule, $CEA$, and the constrained equal-losses rule, $CEL$ %and the Talmud rule 
\citep[see][]{thomson2019divide}.\footnote{Formally, 
\[
f^{P}(\lambda ,d_i)= \lambda d_i \text{, for all }%
\lambda\in[0,1] \text{, }d_i\in \Bbb{R}_{+}\text{,} 
\]
\[
f^{CEA}(\lambda ,d_i)=\min \left\{ \lambda ,d_i\right\} \text{, for all }%
\lambda \text{, }d_i\in \Bbb{R}_{+}\text{,} 
\]
\[
f^{CEL}(\lambda ,d_i)=\max \left\{ 0,d_i-\frac{1}{\lambda }\right\} \text{%
, for all }\lambda\in[0,+\infty] \text{, }d_i\in \Bbb{R}_{+}\text{.} 
\]
}
Proposition 1 in \cite{gallo2018rationing} states that the parametric rules induce coalition formation problems with a non-empty core. As the parametric rules satisfy solidarity \citep[see][]{thomson2019divide}, our Theorem \ref{stable} guarantees that the induced coalition formation problems are non-circular and, therefore, by our Corollary \ref{stable1}, the core is non-empty.\footnote{\label{modif}In our benchmark analysis endowments are not constrained, whereas in this application they cannot be above the coalition's aggregate claim. Nevertheless, the proofs of Theorem \ref{stable} and Corollary \ref{stable} are also valid (with minor modifications) under that premise.}

\medskip

\noindent A focal non-parametric rule is the so-called \textit{random arrival rule}.\footnote{\cite{gallo2018rationing} refer to this rule as the Shapley value.} This rule gives each agent the average payoff after considering all possible arrival orderings and fully reimbursing each agent, in each order, until the endowment runs out \citep[see][]{o1982problem, thomson2019divide}. Formally, 
$$RA_i(C,E_C, d_C)=\frac{1}{|C|!}\sum_{\succ\in{\cal{O}}^{C}}\min\{d_{i},\max\{E-\sum_{j\in C,j\succ i}d_{j},0\}\},$$ where ${\cal{O}}^{C}$ denotes the set of strict orders in $C$. \cite{gallo2018rationing} show that the random arrival rule can generate a coalition formation problem with an empty core. As this rule does not satisfy \textit{solidarity}, we know from Theorem \ref{stable} that it does not always a induce non-circular coalition formation problem.

\subsection*{Surplus sharing problems}

\noindent We now consider the related case of coalition formation in surplus sharing problems. Surplus sharing problems \citep{moulin2002axiomatic} are opposite to rationing problems in the sense that endowments exceed the sum of claims. Formally, a \textbf{surplus sharing problem} for coalition $C$ is a triple $(C,E_C,d_C)$, with $E_C\geq\sum_{i\in C}d_i$. An allocation for $(C,E_C,d_C)$ is a vector $x=(x_{i})_{i\in C}\in \mathbb{R}^{C}_{+}$ such that, for each $i\in C$, $d_i\leq x_i$ and $\sum_{i\in C}x_{i}=E_C$. A \textbf{(sharing) rule} $F$ is a function that associates with each problem $(C, E_C, d_C)$ an allocation. Given a set of surplus sharing problems $\{(C,E_C, d_C)\}_{C\subseteq N, E_C \geq \sum_{i \in C} d_i}$, the \textbf{coalition formation problem induced by rule $\boldsymbol{F}$}, $\succsim^F$, is defined as in Section 3.\medskip

\noindent We define some focal surplus sharing rules. \noindent We start with the counterpart family of parametric rules in this setting \citep{moulin1987equal}. Formally, a (surplus sharing) rule $S$ is \textit{parametric} if there exists a function $f:[0,+\infty]\times \mathbb{R}_{+}\rightarrow \mathbb{R}_{+},$ continuous and weakly monotonic in both arguments, such that:
\begin{itemize}
    \item[($i$)] $S_{i}(C, E_C, d_C)=f(\lambda, d_{i})$ for each problem $(C,E_C, d_C)$ and for some $\lambda \in [0,+\infty]$, and
    \item[($ii$)] $f(0,x)=0$ and $f(+\infty, x) =+\infty$ for all $x\in \mathbb{R}_{+}$.
\end{itemize}
\noindent The family of parametric rules includes some well-known rules such as the proportional rule, $P$, the uniform gains rule, $UG$, and the equal surplus rule, $ES$ \citep[see][]{thomson2019divide}.\footnote{Formally,
\[
f^{P}(\lambda ,d_i)= \lambda d_i \text{, for all }%
\lambda\in[0,+\infty] \text{, }d_i\in \Bbb{R}_{+}\text{,} 
\]
\[
f^{UG}(\lambda ,d_i)=\max \{ \lambda ,d_i\},\text{ for all }%
\lambda\in[0,+\infty] \text{, }d_i\in \Bbb{R}_{+}\text{,} 
\]
\[
f^{ES}(\lambda ,d_i)=d_i+\lambda \text{, for all }\lambda\in[0,+\infty] \text{, }d_i\in \Bbb{R}_{+}\text{.} 
\]
}
All parametric rules for surplus sharing problems satisfy solidarity. Thus, our Theorem \ref{stable} guarantees that the induced coalition formation problems are non-circular and, as our Corollary \ref{stable1} states, that the core is not empty.\footnote{A similar caveat to the one made at Footnote \ref{modif} applies here.}

\medskip

\noindent However, there are also (non-parametric) rules for surplus sharing problems that do not satisfy solidarity. For instance, let us consider the following extension of the random arrival rule from rationing problems to surplus sharing problems. First, award all agents their claims as many times as the endowment allows. Then, assign the residual endowment (if it exists) sequentially according to an ordering of the agents. The \textit{extended random arrival for surplus sharing problems (ERA)} gives each agent the average payoff over all possible orderings. Formally, 
$$ERA_i(C,E_C, d_C)=kd_{i}+\frac{1}{|C|!}\sum_{\succ\in{\cal{O}}^{C}}\max\{E-k\sum_{j\in C}d_j-\sum_{j\succ i}d_{j},0\},$$ where 
$k=\left\lfloor{\frac{E_C}{\sum_{j\in C}d_j}}\right\rfloor$, 
%such that $E_C-\sum_{j\in C}d_j\ge 0>E_C-(k+1)\sum_{j\in C}d_j$, 
and ${\cal{O}}^{C}$ denotes the set of strict orders in $C$.\footnote{Note that $ERA_i(C,E_C, d_C)=kd_{i}+RA_i(C,E_C-k\sum_{j\in C}d_j,d_C)$.}

\medskip

\noindent The following example shows that this rule does not guarantee stability and also illustrates that parametric rules do so.

\begin{example}
\noindent Let $N=\{1,2,3,4\}$ be such that $d=(1,3,3,10)$. Assume the following coalitional endowments:\smallskip

\begin{table}[th]
\centering
\begin{tabular}{|c|c|c|c|c|}
\hline
$C$ & $\{13\}$ & $\{23\}$ & $\{124\}$ & otherwise
\\ \hline
$E_C$ & $5$ & $8$ & $17$ & $\sum_{i\in C}d_i$ \\ \hline
\end{tabular}%
\end{table}

\noindent Then, the $ERA$ rule yields the following allocations:
\smallskip

\begin{table}[th!]
	\centering
		\begin{tabular}{|c|c|c|c|c|}
			\hline
			$C$ & $\{13\}$& $\{23\}$ & $\{124\}$ & otherwise \\ \hline
			$ERA(C,E_C,d_C)$ 
	& $(1.5, 3.5)$ & $(4,4)$ & $(1.33, 4.33, 11.33)$ & $(d_i)_{i\in C}$ \\ \hline
		\end{tabular}
	
\end{table}
\smallskip

\noindent As a consequence, the coalition formation problem induced by $ERA$, $\succsim^{ERA}$, is the following:

\begin{table}[th!]
\centering
\begin{tabular}{cccc}

 $\mathbf{1}$ & $\mathbf{2}$ & $\mathbf{3}$ & $\mathbf{4}$ \\ \hline
 $13$ & $124$ & $23$ & $124$ \\ 
 $124$ & $23$ & $13$ & $4\sim 14\sim 24\sim 34 \sim$ \\ 
$1 \sim 12 \sim 14 \sim$ &  $2\sim 12 \sim 24 \sim$ & $3\sim 34\sim 123 \sim$ & $\sim 134\sim 234\sim 1234$\\  
 $\sim 123 \sim 134 \sim 1234$ &  $\sim 123 \sim 234 \sim 1234$ & $\sim 134 \sim 234 \sim 1234$ & \\ \hline
\end{tabular}%

\end{table}

\noindent Observe that this problem is not a non-circular coalition formation problem. Although weak pairwise alignment is satisfied in this coalition formation problem, $(\{124\},\{13\},\{23\})$ is a ring and it can be easily checked that the core is empty.

\medskip

\noindent Consider now the \textit{uniform gains} rule (a parametric rule). This rule yields the following allocations in this problem:
\smallskip

\begin{table}[th!]

	\centering
		\begin{tabular}{|c|c|c|c|c|}
			\hline
			$C$ & $\{13\}$& $\{23\}$ & $\{124\}$ & otherwise\\ \hline
			$UG(C,E_C,d_C)$ 
	& $(2, 3)$ & $(4,4)$ & $(3.5, 3.5,10)$ & $(d_i)_{i\in C}$\\ \hline
		\end{tabular}
	
\end{table}
\smallskip

\noindent As a consequence, the coalition formation problem induced by $UG$, $\succsim^{UG}$, is the following:
\smallskip
\newpage

\begin{table}[th!]
\centering
\begin{tabular}{cccc}

 $\mathbf{1}$ & $\mathbf{2}$ & $\mathbf{3}$ & $\mathbf{4}$ \\ \hline
 $124$ & $23$ & $23$ & $4\sim 14\sim 24\sim 34 \sim$ \\ 
 $13$ & $124$ & $3 \sim 13 \sim 34 \sim 123 \sim$ & $\sim 124 \sim 134 \sim 234\sim 1234$ \\ 
 $1 \sim 12 \sim 14 \sim$ & $2\sim 12 \sim 24 \sim$ & $\sim 134 \sim 234 \sim 1234$ & \\  
$\sim 123 \sim 134 \sim 1234$ & $\sim 123 \sim 234 \sim 1234$ &  & \\ \hline
\end{tabular}

\end{table}

\noindent Observe that this coalition formation problem satisfies weak pairwise alignment and has no rings. Then, it is a non-circular coalition formation problem. In particular, partition $\{\{23\},\{1\},\{4\}\}$ is stable.
\end{example}

\subsection*{Ranking problems}
 
\cite{dietzenbacher2023fair} introduce the problem of prize allocation in competitions, in which a prize endowment has to be shared among the participants of a competition according to their ranking. With this idea in mind, we propose a model where agents are ranked and all coalitions can be formed. Then, the rule may take the ranking of the agents into account to derive the final individual payoffs. This model can be applied to any setting where agents can be ordered according to some characteristic (such as their expertise or past performance).

\medskip

\noindent Formally, let $N=\{1, \ldots, n\}$ be the set of agents. A \textbf{ranking} $\mathcal{R}$ is a bijection $\mathcal{R}:N \longrightarrow N$ that assigns to each agent a position, i.e., $\mathcal{R}(i)$ is the position of agent $i$ in the ranking. We say that agent $i\in N$ has a higher position in the ranking than agent $j\in N$ if $\mathcal{R}(i)<\mathcal{R}(j)$. For each $C \subseteq N$, let $E_C$ denote the endowment of coalition $C$ and $\mathcal{R}_C$ the projection of ranking $\mathcal{R}$ to $C$. That is, the position of agent $i$ in coalition $C$ is the number of agents, including herself, that have a higher position in that coalition. Formally, for each $i\in C$, $\mathcal{R}_C(i)=|\{j\in C: \mathcal{R}(i)\leq\mathcal{R}(j)\}|$. For each coalition $C \subseteq N$, denote by $(C,E_C,\mathcal{R}_C)$ the {\bf ranking problem} for coalition $C$. An allocation for $(C,E_C,\mathcal{R}_C)$ is a vector $x=(x_{i})_{i\in C}\in \mathbb{R}^{|C|}_{+}$ such that $\sum\nolimits_{i\in C}x_{i}=E_C$. A \textbf{(sharing) rule} $F$ is a function that associates with each problem $(C,E_C,\mathcal{R}_C)$ an allocation. Given a set of ranking problems $\{(C,E_C, \mathcal{R}_C)\}_{C\subseteq N, E_C \in \mathbb{R}_+}$, the \textbf{coalition formation problem induced by rule $\boldsymbol{F}$}, $\succsim^F$, is defined as in Section 3.\footnote{\cite{lucchetti2022coalition} consider a different approach to induce coalition formation problems using a ranking of agents (which is not exogenous, but rather deduced from a ranking over the different coalitions). They also focus on the study of core stable partitions. However, neither coalitional endowments nor sharing rules do appear in their model and, thus, our results cannot be applied therein.}

\medskip

\noindent We reformulate the family of \emph{interval rules} considered in \cite{dietzenbacher2023fair} for each ranking problem $(C, E_C, \mathcal{R}_C)$. Informally, given a ranking problem $(C, E_C, \mathcal{R}_C)$, each interval rule works as follows: first, a set of disjoint intervals is defined. Then, if the average endowment $\frac{E_C}{|C|}$ does not belong to any of the intervals, the endowment $E_C$ is equally split among the agents. Otherwise, the agents get the lower bound of the interval to which $\frac{E_C}{|C|}$ belongs. If there is endowment left, each agent is allocated up to the upper bound of that interval following the ranking $\mathcal{R}_C$. 
The formal definition is as follows:

\medskip

\noindent \textbf{Interval rules for ranking problems}: There exist disjoint intervals $(a_1, b_1), (a_2, b_2), \ldots$ with $a_1, a_2, \ldots \in \mathbb{R}_+$ and $b_1, b_2, \ldots \in \mathbb{R}_{+}\cup\{+\infty\}$ such that for each problem $(C,E_C,\mathcal{R}_C)$, 
$$I_{i}(C,E_C,\mathcal{R}_C)= \left\{ \begin{array}{lcc}
            a_k &   \mbox{if}  &  |C| a_k \leq E_C \leq ( |C|-\beta) a_k + \beta b_k; \\
            x &   \mbox{if}  & (|C|-\beta) a_k + \beta b_k \leq E_C \leq (|C| - \mathcal{R}_C (i))a_k + \mathcal{R}_C(i)b_k;\\
            b_k &   \mbox{if}  & ( |C| - \mathcal{R}_C(i))a_k + \mathcal{R}_C(i)b_k \leq E_C \leq  |C| b_k; \\
            \frac{E_C}{ |C|} & & \mbox{otherwise,} 

             \end{array}
   \right.$$
where $\beta=\mathcal{R}_C(i) - 1$ and $x = E -(|C| - \mathcal{R}_C(i))a_k-\beta b_k$.
 \medskip

\noindent As \cite{dietzenbacher2023fair} mention, the interval rule with $a_k = b_k = 0$ for each $k$ coincides with the Equal Division while the interval rule with $a_1 = 0$ and $b_1 = +\infty$ coincides with the Winner Takes All, both well-known rules.
\medskip

\noindent Theorem 1 in \cite{dietzenbacher2023fair} states that these are the only order-preserving\footnote{If ${\cal R}(i) < {\cal R}(j)$, then $F_i(C, E_C, {\cal R}_C) \geq F_j(C, E_C, {\cal R}_C)$.} rules that satisfy solidarity.\footnote{\cite{dietzenbacher2023fair} do not use the axiom of \textit{solidarity}, but the separate axioms of \textit{endowment monotonicity} and \textit{consistency}.} Consequently, our Theorem \ref{stable} yields the following.

\begin{corollary}\label{Bas}
The interval rules are the only order-preserving rules for ranking problems that induce non-circular coalition formation problems.
\end{corollary}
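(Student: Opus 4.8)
The plan is to combine two ingredients that are already available in the excerpt: Theorem~\ref{stable}, which tells us that a sharing rule induces non-circular coalition formation problems if and only if it satisfies \textit{solidarity}, and Theorem~1 of \cite{dietzenbacher2020fair}, which (rephrased via the equivalence in our Lemma~\ref{rm+c}) tells us that the interval rules are precisely the order-preserving rules satisfying \textit{solidarity}. The claim of Corollary~\ref{Bas} is then just the composition of these two characterizations restricted to the class of order-preserving rules, so the proof should essentially be a two-line equivalence chain rather than a fresh argument.

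Concretely, I would argue as follows. Fix an order-preserving sharing rule $G$ for ranking problems. By Theorem~1 in \cite{dietzenbacher2020fair}, together with the equivalence stated in the second bullet of Lemma~\ref{rm+c} (solidarity is equivalent to endowment monotonicity plus consistency, which are the two axioms Dietzenbacher et al.\ actually use), $G$ is an interval rule if and only if $G$ satisfies \textit{solidarity}. On the other hand, Theorem~\ref{stable} gives that $G$ satisfies \textit{solidarity} if and only if $G$ induces a non-circular coalition formation problem. Chaining these two biconditionals yields: $G$ is an interval rule $\iff$ $G$ induces a non-circular coalition formation problem, which is exactly the statement to be proved.

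The one point that needs a small amount of care, rather than being purely formal, is checking that Theorem~\ref{stable} applies verbatim in the enriched setting of ranking problems. Our main theorem is stated for sharing problems $(C,E)$, whereas here each problem carries the extra datum of a projected ranking $\mathcal{R}_C$. I would note that, since the ranking $\mathcal{R}$ is fixed throughout and $\mathcal{R}_C$ is merely its restriction to $C$, a ranking rule $G$ is formally a sharing rule in the sense of Section~2.1 once the fixed ranking is absorbed into the specification of $G$; the axioms of \textit{solidarity}, \textit{endowment monotonicity}, and \textit{consistency} read identically, and the induced preference profile $\succsim^G$ is defined exactly as $\succsim^F$ was. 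Hence Theorem~\ref{stable} transfers without modification. This compatibility check is the main (mild) obstacle; once it is dispatched, the rest is the mechanical composition of two known characterizations.

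I expect no genuine difficulty beyond this bookkeeping, since both input results are quoted as established. The only subtlety worth flagging explicitly in the write-up is the asymmetry in how the two source theorems phrase solidarity: \cite{dietzenbacher2020fair} state their result in terms of \textit{endowment monotonicity} and \textit{consistency} separately, so invoking our Lemma~\ref{rm+c} to collapse these into \textit{solidarity} is the bridge that makes the composition clean. I would make that invocation explicit so that the reader sees that the two characterizations are being glued along a common middle term, namely the \textit{solidarity} axiom.
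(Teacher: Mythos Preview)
Your proposal is correct and mirrors the paper's own argument: the corollary is obtained directly by combining Theorem~\ref{stable} with Theorem~1 of \cite{dietzenbacher2020fair}, using Lemma~\ref{rm+c} to translate between \textit{solidarity} and the pair \textit{endowment monotonicity} plus \textit{consistency}. Your extra remark that the ranking data $\mathcal{R}_C$ can be absorbed into the rule so that Theorem~\ref{stable} applies verbatim is a welcome clarification, but does not depart from the paper's approach.
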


\noindent Corollary \ref{Bas} also implies that the interval rules guarantee stability in the induced coalition formation problems. Other interesting rules proposed by \cite{dietzenbacher2023fair} do not yield stability. The following example illustrates these features.
%\medskip
\medskip

\begin{example}
\noindent Let $N=\{1,2,3\}$ be such that $\mathcal{R}(i)=i$ for each $i \in N$. Assume the following coalitional endowments:\smallskip
%\newpage

\begin{table}[th]
\centering
\begin{tabular}{|c|c|c|c|c|c|}
\hline
$C$ & $\{12\}$ & $\{13\}$ & $\{23\}$ & $\{123\}$ & otherwise
\\ \hline
$E_C$ & $20$ &  $15$ & $14$ & $21$  & $0$\\ \hline
\end{tabular}%
\end{table}

\noindent Consider first an interval rule $I$ with $(a_1, b_1)=(2,9)$, $(a_2,b_2)=(9,10.5)$ and $(a_3, b_3)=(10.5, +\infty)$. This rule yields the following allocations:
%We obtain the following individual payoffs:
\smallskip

\begin{table}[th!]
	\centering
		\begin{tabular}{|c|c|c|c|c|c|}
			\hline
			$C$ & $\{12\}$& $\{13\}$ & $\{23\}$ & $\{123\}$ & otherwise \\ \hline
			$I(C,E_C,\mathcal{R}_C)$ 
	& $(10.5, 9.5)$& $(9,6)$	& $(9,5)$ & $(9,9,3)$ & $(0)_{i\in C}$ \\ \hline
		\end{tabular}
	
\end{table}
\smallskip

\noindent As a consequence, the coalition formation problem induced by $I$, $\succsim^{I}$, is the following:

\begin{table}[th!]
\centering
\begin{tabular}{ccc}

 $\mathbf{1}$ & $\mathbf{2}$ & $\mathbf{3}$   \\ \hline
 $12$ & $12$ & $13$  \\ 
 $13 \sim123$ & $23 \sim123$ & $23$ \\ 
 $1$ & $2$ & $123$ \\  
  &  & $3$ \\ \hline
\end{tabular}%

\end{table}

\noindent Observe that this coalition formation problem satisfies weak pairwise alignment and has no rings. Then, it is a non-circular coalition formation problem. In particular, partition $\{\{12\},\{3\}\}$ is stable.

\medskip

\noindent We finally consider a class of rules for ranking problems based on the family of proportional rules defined in \cite{dietzenbacher2023fair}.\\

\noindent \textbf{Proportional rules for ranking problems}: Each of the rules in this family defines a set of parameters $\lambda^{1}, \lambda^{2}, \ldots \lambda^{|N|} \in \mathbb{R}_{+}$ such that $\lambda^{1}>0$ and $\lambda^{k}\ge\lambda^{k+1}$ for each $k\in \{1, \ldots, |N|-1\}$, and assigns an allocation to the agent in position $j$ in the ranking proportionally to $\lambda^j$. Formally, for each problem $(C,E_C,\mathcal{R}_C)$, $$PS_{i}(C,E_C,\mathcal{R}_C)=\frac{\lambda^{\mathcal{R}_{C}(i)}}{\sum_{j\in C} \lambda^{\mathcal{R}_{C}(j)}} \cdot E_{C}.$$
\noindent There exist rules within this family that do not satisfy solidarity and do not guarantee stability. An instance is the rule obtained when $|N| = 3$, $\lambda^{1}=3$ and $\lambda^{2} = \lambda^{3}=1$, which yields the following allocations in this problem:
\smallskip

\begin{table}[th!]
\centering
\begin{tabular}{|c|c|c|c|c|c|}
\hline
$C$ &  $\{12\}$ & $\{13\}$ &  $\{23\}$ & $\{123\}$ & otherwise\\ \hline
$PS(C,E_C,\mathcal{R}_C)$ & $(15,5)$ &  $(11.25, 3.75)$ & $(10.5, 3.5)$ & $(12.6,4.2,4.2)$ & $(0)_{i\in C}$ \\ \hline
\end{tabular}
\end{table}
\smallskip

\noindent As a consequence, the coalition formation problem induced by $PS$, $\succsim^{PS}$, is the following:
%\smallskip

\begin{table}[H]
\centering
\begin{tabular}{ccc}
$\mathbf{1}$ & $\mathbf{2}$ & $\mathbf{3}$ \\ \hline
$12$ & $23$ & $123$ \\ 
$123$ & $12$ & $13$ \\ 
$13$ & $123$ & $23$ \\
$1$ & $2$ & $3$ \\  \hline
\end{tabular}
\end{table}

\noindent Note that this problem is not a non-circular coalition formation problem. In particular, $(\{12\}$, $\{23\}$, $\{13\})$ is a ring and, as $23\succ_2 123$, while $123\succ_3 23$, \textit{weak pairwise alignment} is also violated. Moreover, it can easily be shown that this problem has an empty core.
\end{example}

\subsection*{Sharing problems with permissible coalitions}

In some resource allocation situations not all coalitions are feasible and, consequently, our results cannot be directly applied. Instances are matching problems \citep[e.g.,][]{demange1985strategy, roth_sotomayor_1990}, network games \citep[e.g.,][]{jackson2003strategic, jackson2005allocation}, river sharing problems \citep[e.g.,][]{ambec2002sharing, alcalde2015sharing} or legislative bargaining problems \citep[e.g.,][]{breton2008gamson, puy2013stable}. However, we propose a way to partially circumvent this issue. Suppose that the resource allocation situation involves just a collection of permissible coalitions ${\cal K} \subset 2^N$ and, therefore, sharing problems are only defined for these coalitions.\footnote{To ensure the existence of partitions, we assume that singletons are permissible (i.e., for each $i \in N$, $\{i\} \in {\cal K}$).} We propose to define, for each non-permissible coalition, an auxiliary sharing problem with a zero endowment. We then define an enlarged model that encompasses the sharing problems for permissible coalitions as well as the auxiliary sharing problems for non-permissible coalitions. Then, Theorem \ref{stable} and Corollary \ref{stable1} can be applied to this enlarged model. Thus, the induced coalition formation problem has a non-empty core if the sharing rule satisfies solidarity. If so, note that there is a stable partition formed by coalitions of ${\cal K}$, given that only those may have a positive endowment. Clearly, such a partition is also stable in the original model. To summarize, given a resource allocation situation where not all coalitions are permissible, we can always obtain a stable partition by constructing such an enlarged model, and then applying a sharing rule that satisfies solidarity therein. Observe that solidarity refers to the sharing rule in the enlarged model, as our results require that all coalitions are feasible. However, the application of a sharing rule satisfying solidarity in the original model does not guarantee stability. We illustrate this discussion with the following example.

\begin{example} \noindent Let $N = \{1,2, 3\}$ and $\mathcal{K}=\{\{12\},\{13\},\{23\},\{1\},\{2\},\{3\}\}$ be the set of permissible coalitions. Assume the following coalitional endowments:

\begin{table}[H]
\centering
\begin{tabular}{|c|c|c|c|c|}
\hline
$C$ & $\{12\}$ & $\{13\}$ & $\{23\}$ & $\{i\}$ 
\\ \hline
$E_C$ & $9$ & $9$ & $9$ & $0$  \\ \hline
\end{tabular}%
\end{table}

\noindent Now, consider the following sharing rule $F$: for each $(C, E_C) \in {\cal P}$, with $C \in {\cal K}$, and each $i \in C$, \smallskip

$$F_i(C, E_C)=
\left\{
\begin{array}{cc}
\frac{2}{3}E_C & \mbox{ if } [i=1 \mbox{ and } C=\{12\}]  \mbox{ or } [i=2  \mbox{ and } C=\{23\}],  \mbox{ or } [i=3  \mbox{ and } C=\{13\}],\\
\frac{1}{3}E_C & \mbox{ if } [i=2  \mbox{ and } C=\{12\}] \mbox{ or } [i=3  \mbox{ and } C=\{23\}],  \mbox{ or } [i=1  \mbox{ and } C=\{13\}],\\
E_C & \mbox{ if } C=\{i\}.\\
\end{array}
\right.
$$

\noindent Note that this rule gives priority to one agent within each pair, but whoever has the priority depends on the coalition. It can be checked that this rule satisfies \textit{solidarity}. Then, $F$ yields the following allocations:
\smallskip

\begin{table}[h!]
	\centering
		\begin{tabular}{|c|c|c|c|c|}
			\hline
			$C$ &  $\{12\}$&  $\{13\}$ & $\{23\}$ & $\{i\}$ \\ 
			\hline
		$F(C, E_C)$ &$(6,3)$	&$(3,6)$ & $(6,3)$ & $(0)$ \\ \hline
		\end{tabular}
	
\end{table}

%\newpage

\noindent As a consequence, the coalition formation problem induced by $F$, $\succsim^{F}$, is the following:
\newpage
\smallskip

\begin{table}[h!]
\centering
\begin{tabular}{ccc}

 $\mathbf{1}$ & $\mathbf{2}$ & $\mathbf{3}$  \\ \hline
 $12$ & $23$ & $13$  \\ 
$13$ & $12$ & $23$  \\ 
$1$ & $2$ & $3$  \\ 
\end{tabular}%

\end{table}

\noindent Observe that $(\{12\},\{23\},\{13\})$ is a ring and that this coalition formation problem has an empty core.

\medskip

\noindent Consider now the enlarged model in which the grand coalition gets a zero endowment: %is as follows:

\begin{table}[H]
\centering
\begin{tabular}{|c|c|c|c|c|c|}
\hline
$C$ & $\{12\}$ & $\{13\}$ & $\{23\}$ & $\{i\}$ & $\{123\}$
\\ \hline
$E_C$ & $9$ & $9$ & $9$ & $0$ & $0$  \\ \hline
\end{tabular}%
\end{table}

\noindent We know, by Corollary 1, that the application to this enlarged model of any sharing rule that satisfies solidarity guarantees the non-emptiness of the core of the induced coalition formation problem. More specifically, we can guarantee that there is a stable partition in which the grand coalition does not take part. Consider, for instance, the  uniform sharing rule, US, that allocates the endowment of each coalition equally among its members (and, thus, it obviously satisfies solidarity).\footnote{Formally, for each $(C, E_C) \in {\cal P}$ and each $i \in C$, $US_i(C, E_C) = \frac{E_C}{|C|}$.} Then, $US$ yields the following allocations:
\smallskip

\begin{table}[h!]
	\centering
		\begin{tabular}{|c|c|c|c|c|c|}
			\hline
			$C$ &  $\{12\}$&  $\{13\}$ & $\{23\}$ & $\{i\}$ & $\{123\}$ \\ 
			\hline
		$US(C, E_C)$ &$(4.5,4.5)$ & $(4.5,4.5)$ & $(4.5,4.5)$ & $(0)$ & $(0,0,0)$ \\ \hline
		\end{tabular}
	
\end{table}

\noindent As a consequence, the coalition formation problem induced by $US$, $\succsim^{US}$, is the following:

%\newpage

\begin{table}[h!]
\centering
\begin{tabular}{ccc}

$\mathbf{1}$ & $\mathbf{2}$ & $\mathbf{3}$  \\ \hline
$12$ $\sim$ $13$ & $12$ $\sim$ $23$ & $13$ $\sim$ $23$  \\ 
$1$ $\sim$ $123$ & $2$ $\sim$ $123$ & $3$ $\sim$ $123$  \\ 
\end{tabular}%

\end{table}
\smallskip

\noindent Observe that this problem has three stable partitions: $\{\{12\},\{3\}\}$, $\{\{13\},\{2\}\}$ and $\{\{23\},\{1\}\}$, all of them formed by permissible coalitions.

\end{example}

\section{Concluding Remarks}

We have studied in this paper coalition formation problems in a context in which coalitions have to share collective resources. We have characterized the sharing rules that induce non-circular coalition formation problems as those satisfying a natural axiom formalizing the principle of solidarity. This implies that such a solidarity axiom guarantees the existence of (core) stable partitions in the induced coalition formation problem. Our result can be applied to canonical problems of resource allocation long studied such as bargaining, rationing, or surplus sharing problems as well as to other problems recently considered such as ranking problems. 

\medskip

\noindent Although our benchmark model requires that all coalitions are feasible, we have also seen that we can partially apply the results to resource allocation situations where not all coalitions are feasible. In contrast, a similar argument cannot be applied to situations where agents are equipped with individual endowments such as revenue sharing in hierarchies \citep[e.g.,][]{hougaard2017sharing, Patrick}. Exploring whether the connection between solidarity (in the resource allocation problem) and stability (in the corresponding coalition formation problem) extends to these cases is left for further research.

%\medskip

%Two instances are precisely rationing problems and surplus sharing problems (which we have analyzed in detail at Section 4), where agents' claims limit the possible values coalitional endowments might take.}

\medskip

\noindent The class of non-circular coalition formation problems is not the only one in which stability is guaranteed. As mentioned, various other properties have been introduced to guarantee the existence of stable partitions in hedonic games. Most of them, like the \emph{top-coalition property} or the \emph{ordinal balancedness property} \cite[]{bogomolnaia2002stability} are sufficient conditions to guarantee the non-emptiness of the core. The somewhat related notion of \emph{pivotal balancedness}, introduced by \cite{iehle2007core}, is both necessary and sufficient for the existence of a core stable partition. Nevertheless, all these properties guarantee the non-emptiness of the core even though they may allow for the presence of rings.\footnote{See the example at page 213 in \cite{bogomolnaia2002stability} for a coalition formation problem satisfying the \emph{ordinal balancedness property} and having a ring but also a non-empty core. The same example is valid for the \emph{pivotal balancedness property}.} A natural research question would be to identify properties of sharing rules that induce coalition formation problems satisfying, for instance, ordinal balancedness or pivotal balancedness. However, this is a challenging question as these properties allow for the existence of rings in agents' preferences. And, as it is known, the non-emptiness of the core when rings exist is contingent on various factors, including the number of coalitions that form the ring and their positions within the preferences of the agents involved \cite[see][]{bonifacio2022stable}. For instance, this last factor is relevant because the higher these positions are, \emph{ceteris paribus}, the more difficult is to have a non-empty core.\footnote{To see the importance of these positions in the non-emptiness of the core when rings exist, see, for instance, Examples 2 and 3 in \cite{gallo2018rationing}.} In our model, these positions depend significantly on the coalitional endowments. Note also that we give full flexibility to the values of these coalitional endowments. Therefore, if a particular sharing rule can generate a ring for some coalitional endowments, changing the endowments of all the coalitions outside the ring to $0$ would suffice to generate a coalition formation problem (induced by that rule) with all the coalitions of the ring occupying the first positions of agents' preferences. Therefore, it seems very difficult to formulate general properties on sharing rules that only generate rings compatible with non-empty core coalition formation problems. A hypothetical characterization of sharing rules that allow for rings but guarantee a non-empty core would be based on restricting the values for coalitional endowments. To summarize, the search for a characterization of the sharing rules that induce coalition formation problems satisfying any of the above conditions (that allow for rings but guarantee a non-empty core) is a daunting task beyond the scope of this paper. We have focused instead on the non-circular property, which conveys the appealing feature of excluding rings altogether (an aspect that can be naturally linked to sharing rules).

\medskip

\noindent Finally, another interesting question is the study of the computational complexity of coalition formation problems. \cite{ballester2004np} studies the complexity of coalition formation games and shows that the computation of stable partitions is NP-complete. The computational complexity of stable partitions in additive coalition formation problems has been studied by \cite{sung2010computational}. More recently, \cite{gairing2019computing} focus on symmetric additively separable coalition formation problems. The study of the computational complexity of non-circular coalition formation problems is left for further research.

\section*{Appendix: Proof of Theorem \ref{stable}}

\noindent We start by showing that if $F$ satisfies solidarity, then $\succsim^F$ always satisfies \emph{weak pairwise alignment}.

\begin{lemma}
\label{wpa}
If $F$ satisfies solidarity, then $\succsim^F$ satisfies \textit{ weak pairwise alignment} for each $\{E_C\}_{C \subseteq N}$.
\end{lemma}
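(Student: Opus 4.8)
The statement to prove is that solidarity of $F$ forces the induced profile $\succsim^F$ to be weakly pairwise aligned. Unpacking the definitions, I must show: for any two coalitions $C,C'\subseteq N$ and any $i,j\in C\cap C'$, if $F_i(C,E_C)>F_i(C',E_{C'})$ then $F_j(C,E_C)\geq F_j(C',E_{C'})$. The structural difficulty is that $C$ and $C'$ need not be nested, so the solidarity axiom (or its consistency component) cannot be applied directly, since consistency only relates a coalition to its \emph{subsets}. My plan is to route both agents' comparisons through the common subcoalition $D=C\cap C'$, which contains both $i$ and $j$.

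The first step is to invoke Lemma \ref{rm+c} to replace solidarity by its two equivalent components, \emph{endowment monotonicity} and \emph{consistency}. I then set $e_C=\sum_{l\in D}F_l(C,E_C)$ and $e_{C'}=\sum_{l\in D}F_l(C',E_{C'})$, both of which lie in $\mathbb{R}_+$ by non-negativity, so $(D,e_C),(D,e_{C'})\in\mathcal{P}$. Applying consistency to the pair $D\subset C$ gives $F_k(D,e_C)=F_k(C,E_C)$ for every $k\in D$, and likewise $F_k(D,e_{C'})=F_k(C',E_{C'})$. The only care needed is the degenerate case $D=C$ (that is, $C\subseteq C'$), where consistency as stated requires a \emph{proper} subset; here one notes instead that efficiency forces $e_C=E_C$, so the identity $F_k(D,e_C)=F_k(C,E_C)$ holds trivially, and symmetrically if $D=C'$. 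After this reduction, both relevant comparisons for $i$ and for $j$ have been transported into the single coalition $D$, now between the two endowment levels $e_C$ and $e_{C'}$.

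The final step is a short argument purely inside $D$ using endowment monotonicity. The hypothesis $F_i(C,E_C)>F_i(C',E_{C'})$ becomes $F_i(D,e_C)>F_i(D,e_{C'})$. Since $F$ is a function, $e_C=e_{C'}$ would force equality of these payoffs, a contradiction; and $e_C<e_{C'}$ would force $F_i(D,e_C)\leq F_i(D,e_{C'})$ by endowment monotonicity, again a contradiction. Hence $e_C>e_{C'}$, and a second application of endowment monotonicity to agent $j$ yields $F_j(D,e_C)\geq F_j(D,e_{C'})$, which translates back to $F_j(C,E_C)\geq F_j(C',E_{C'})$, i.e. $C\succsim^F_j C'$, as required.

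The main obstacle is conceptual rather than computational: it is the non-nestedness of $C$ and $C'$, which blocks a one-line application of the axioms. Once the intersection $D$ is introduced and consistency is used to pin the agents' payoffs to those in $(D,e_C)$ and $(D,e_{C'})$, the problem collapses to a single-coalition monotonicity comparison, where the sign of $e_C-e_{C'}$ determines the common direction in which \emph{all} members of $D$ move. I expect the auxiliary Lemma \ref{auxiliary} to be unnecessary for this lemma (it is geared toward the ring-freeness part of the main theorem); the only subtle point to state explicitly is the handling of the boundary case where one coalition contains the other.
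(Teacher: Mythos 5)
Your proof is correct and takes essentially the same route as the paper's: both use Lemma \ref{rm+c} to split solidarity into consistency and endowment monotonicity, transport both agents' payoffs into a common subcoalition via consistency, and conclude with a monotonicity comparison of the two reduced endowments. The only cosmetic differences are that the paper reduces to the pair $\{i,j\}$ rather than to $D=C\cap C'$, and it handles the nested case by applying solidarity directly where you use the efficiency observation $e_C=E_C$; neither changes the substance of the argument.
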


\begin{proof}
Let $F$ be a sharing rule that satisfies \textit{solidarity}. Then, by Lemma \ref{rm+c}, $F$ satisfies \textit{endowment monotonicity} and \textit{consistency}. Let $C,C'\subseteq N$, $E_C, E_{C'} \in \mathbb{R}_+$, and $i,j\in C\cap C'$.

\medskip

\noindent If $C\subset C'$ or $C'\subset C$, then by \textit{solidarity}, either [$F_{k}(C,E_C) \leq F_{k}(C',E_{C'})$ for each $k\in\{i,j\}$] or [$F_{k}(C,E_C) \geq F_{k}(C',E_{C'})$ for each $k\in\{i,j\}$]. Therefore, agents $i$ and $j$ do not rank $C$ and $C'$ in opposite ways.

\medskip

\noindent Otherwise, $C\not\subset C'$ and $C'\not\subset C$. Then, let $x_k=F_k(C,E_{C})$ for each $k \in C$, and $x'_{k'}=F_{k'}(C',E_{C'})$ for each $k' \in C'$. Consider the sharing problems $(\{i,j\}, x_{i}+x_{j}), (\{i,j\}, x'_i + x'_j)$. By \textit{consistency},
\begin{equation*}
(x_i, x_j)=F(\{i,j\}, x_{i}+x_{j})\text{ and } (x'_i, x'_j) = F(\{i,j\}, x'_{i}+x'_{j}).
\end{equation*}%

\noindent Assume, without loss of generality, that $x_{i} + x_{j} \geq x'_{i}+x'_{j}$. Then, by \textit{endowment monotonicity}, for each $k\in\{i,j\}$, $x_{k}=F_{k}(\{i,j\}, x_{i}+x_{j}) \geq F_{k}(\{i,j\},x'_{i}+x'_{j})=x'_{k}$. Therefore, for each $k\in\{i,j\}$, $F_{k}(C,E_{C})\geq F_{k}(C',E_{C'})$. 
Consequently, agents $i$ and $j$ do not rank $C$ and $C'$ in opposite ways.

\medskip

\noindent Hence, $\succsim^{F}$ satisfies \textit{weak pairwise alignment}, as desired. 
\end{proof}

\noindent We now show that if $F$ satisfies solidarity, then $\succsim^F$ has no rings. To prove this, we first introduce the following auxiliary lemma. %It shows that, given any coalition and any set of subcoalitions whose union is that coalition, there is always at least one of the subcoalitions for which if the sharing rule applied satisfies solidarity, then it allocates the same payoffs to its agents in that subcoalition than in the coalition for some particular endowment.}

\begin{lemma}
\label{auxiliary} Let $F$ be a sharing rule that satisfies \textit{solidarity}. Let $C \subseteq N$ and $\{C_1, \ldots, C_m\}$ be a set of coalitions such that $\cup_{k=1}^m C_k = C$. Then, there is $C_l \in \{C_1, \ldots, C_m\}$ such that for each $E_{C_l} \in \mathbb{R}_+$, there exists $E_C \in \mathbb{R}_+$ for which $F_i(C_l, E_{C_l}) = F_i(C, E_C)$ for each $i \in C_l$.
\end{lemma}

\begin{proof}
Let $F$ be a sharing rule that satisfies \textit{solidarity}. Then, by Lemma \ref{rm+c}, $F$ satisfies \textit{endowment continuity} and \textit{consistency}. Let $C \subseteq N$ and $\{C_1, \ldots, C_m\}$ be such that $\cup_{k=1}^m C_k = C$.

\medskip

\noindent We first prove that there exists $j \in C$ such that $\lim_{E_C \rightarrow \infty} F_j(C, E_C) = \infty$. Suppose otherwise. Then, for each $j \in C$, there exists $M_j \in \mathbb{R}_+$ such that $F_j(C, E) < M_j$ for each $E\in \mathbb{R}_+$ arbitrarily large. Denote $M \equiv \max_{j \in C} M_j$. Then, $F_j(C, |C| M) < M$ for each $j \in C$ and, therefore, $\sum_{j\in C} F_j(C, |C| M) < |C| M$, which contradicts the definition of an allocation.

\medskip

\noindent Now, as $\cup_{k=1}^m C_k = C$, there is $C_l \in \{C_1, \ldots, C_m\}$ such that $j \in C_l$. We then construct $\alpha : \mathbb{R}_+ \rightarrow \mathbb{R}_+$ such that for each $\xi \in \mathbb{R}_+$, $\alpha (\xi) = \sum_{i \in C_l} F_i(C, \xi)$. Then, $\alpha(0) =\sum_{i \in C_l} F_i(C, 0) = 0$ and $\lim_{\xi \rightarrow \infty} \alpha(\xi)=\lim_{\xi \rightarrow \infty} \sum_{i \in C_l} F_i(C, \xi) = \infty$.
%Observe that, by definition, $\sum_{i \in C_l} F_i(C, 0) = 0$ and $\lim_{E_C \rightarrow \infty} \sum_{i \in C_l} F_i(C, E_C) = \infty$ and, thus, $\alpha(0) = 0$ and $\lim_{\xi \rightarrow \infty} \alpha(\xi) = \infty$. 
As $F$ satisfies \textit{endowment continuity}, $\alpha$ is continuous. Then, for each $E_{C_l} \in \mathbb{R}_+$, there exists $E_C \in \mathbb{R}_+$ for which $\alpha(E_C) = E_{C_l}$. By \textit{consistency}, for each $i \in C_l$, $F_i(C_l, E_{C_l}) = F_i(C, E_C)$, as desired.
\end{proof}

%\noindent Next, we prove that if $F$ satisfies \textit{solidarity}, then $\succsim^F$ generates no rings.

\begin{lemma}
\label{noring}
If $F$ satisfies solidarity, then for each $\{E_C\}_{C \subseteq N}$, $\succsim^F$ has no rings.
\end{lemma}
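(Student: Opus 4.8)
The plan is to argue by contradiction and to induct on $|N|$. Suppose $\succsim^{F}$ has a ring $(C_{1},\dots,C_{l})$ for some profile $\{E_{C}\}_{C\subseteq N}$, and set $M=\bigcup_{k=1}^{l}C_{k}$. Throughout I would lean on two facts already in hand: by Lemma \ref{wpa} the induced problem $\succsim^{F}$ satisfies weak pairwise alignment for \emph{every} endowment profile, and by Lemma \ref{ring-N} the grand coalition $N$ can never appear in a ring. The strategy is to reduce any hypothetical ring to one that contains $N$, so that Lemma \ref{ring-N} delivers the contradiction.

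First I would dispose of the easy configurations. If $C_{k}=N$ for some $k$, then $N$ sits in the ring and Lemma \ref{ring-N} applies at once. If instead $M\subsetneq N$, I would restrict attention to the agent set $M$: the rule $F$ confined to coalitions contained in $M$ still satisfies \emph{solidarity}, and the ring involves only such coalitions, so it is a ring of the induced problem on $M$, contradicting the induction hypothesis since $|M|<|N|$. The base case $|N|=1$ is vacuous, as no list of more than two coalitions with nonempty consecutive intersections exists.

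The substantive case is $M=N$ with every $C_{k}\subsetneq N$. Here I would invoke Lemma \ref{auxiliary} with $C_{1}\subsetneq N$ to pick an endowment $E_{N}$ satisfying $F_{i}(N,E_{N})=F_{i}(C_{1},E_{C_{1}})$ for each $i\in C_{1}$, and then reset the endowment of $N$ to this $E_{N}$. This change touches no comparison among $C_{1},\dots,C_{l}$ (none of them is $N$), and by Lemma \ref{wpa} the modified profile still satisfies weak pairwise alignment. I would then claim that $(N,C_{2},\dots,C_{l})$ is a ring. The interior links $C_{2}\to\cdots\to C_{l}$ are inherited unchanged. For the new link into $C_{2}$, the strict preference $C_{2}\succ_{j^{*}}^{F}C_{1}$ of the original ring becomes $C_{2}\succ_{j^{*}}^{F}N$ because $N\sim_{j^{*}}^{F}C_{1}$; applying weak pairwise alignment on the intersection $N\cap C_{2}=C_{2}$ upgrades this single strict comparison to $C_{2}\succsim_{j}^{F}N$ for \emph{all} $j\in C_{2}$. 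Symmetrically, the strict preference $C_{1}\succ_{\hat{\jmath}}^{F}C_{l}$ becomes $N\succ_{\hat{\jmath}}^{F}C_{l}$ and weak pairwise alignment on $N\cap C_{l}=C_{l}$ yields $N\succsim_{j}^{F}C_{l}$ for all $j\in C_{l}$. Thus $(N,C_{2},\dots,C_{l})$ is a ring containing $N$, contradicting Lemma \ref{ring-N}.

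The delicate point, and the step I expect to be the main obstacle, is precisely this last upgrade: Lemma \ref{auxiliary} only pins down the payoffs of the agents in $C_{1}$, yet the links incident to $N$ run over the full intersections $N\cap C_{2}=C_{2}$ and $N\cap C_{l}=C_{l}$, which include agents outside $C_{1}$ over whom we have no direct control. Weak pairwise alignment is exactly the tool that converts the one guaranteed strict comparison in each such intersection into a uniform weak comparison across the whole intersection, so the care required is in checking that resetting $E_{N}$ together with Lemma \ref{wpa} legitimately preserves alignment in the modified profile and that each verified link genuinely retains at least one strict preference.
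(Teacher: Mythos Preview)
Your argument is correct and follows essentially the same route as the paper: adjust $E_N$ via Lemma~\ref{auxiliary} so that $N$ mimics $C_1$, then use weak pairwise alignment to reach a contradiction (you package the final step as ``$(N,C_2,\dots,C_l)$ is a ring, invoke Lemma~\ref{ring-N}'', whereas the paper re-runs the chain-of-preferences reasoning of Lemma~\ref{ring-N} inline). Your induction on $|N|$ is an unnecessary embellishment, since the substantive case already covers $M\subsetneq N$---the only thing the argument needs is $C_1\subsetneq N$, which Lemma~\ref{ring-N} guarantees.
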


\begin{proof}
Let $F$ be a sharing rule that satisfies \textit{solidarity}. Suppose by contradiction that there exists $\{E_C\}_{C \subseteq N}$ such that $\succsim^{F}$ has a ring $(C_0, C_1 \ldots, C_{l-1})$. Then, for each subscript $k=0, \ldots, l-1$ (modulo $l$), there is at least one agent, say agent $j_{k+1} \in C_{k+1} \cap C_{k}$, such that $C_{k+1} \succ_{j_{k+1}}^{F} C_{k}$. %Given Lemma \ref{ring-N}, $N \neq C_k$ for each $k \in \{0, 1 \ldots, l-1\}$. 

\medskip

\noindent Let $\hat{C} = \cup_{k=0}^{l-1} C_k$. By Lemma \ref{auxiliary}, there is $C_k \in \{C_0, C_1, \ldots, C_{l-1}\}$ and $\hat{E}_{\hat{C}} \in \mathbb{R}_+$ such that for each $i \in C_k$, $F_i(C_k, E_{C_k}) = F_i(\hat{C}, \hat{E}_{\hat{C}})$. Assume without loss of generality that $C_k = C_0$.
%Consider the coalition $\hat{C} = \cup_{k=0}^{l-1} C_k$ and the set $\{C_0, C_1, \ldots, C_{l-1}\}$. Then, we can apply Lemma \ref{auxiliary} (with $\hat{C}$ playing the role of $C$ in that lemma) and we obtain that there is $k \in \{0, 1, \ldots, l-1\}$ such that there exists $\hat{E}_{\hat{C}} \in \mathbb{R}_+$ for which $F_i(C_k, E_{C_k}) = F_i(\hat{C}, \hat{E}_{\hat{C}})$ for each $i \in C_k$. Assume without loss of generality that $k = 0$. 

\medskip

\noindent Consider now $\{E'_C\}_{C \subseteq N}$ such that $E'_C=E_C$ for each $C \in 2^N \setminus \{\hat{C}\}$ and $E'_{\hat{C}} = \hat{E}_{\hat{C}}$. We denote by $\succsim^{F'}$ the coalition formation problem when $F$ is applied and the endowments are $\{E'_C\}_{C \subseteq N}$.

\medskip

\noindent By construction, for each $i\in C_0$, $F_i(\hat{C}, E'_{\hat{C}}) = F_i(C_0, E'_{C_0})$ and, therefore, $C_0 \sim^{F'}_{i} \hat{C}$. In particular, $C_0 \sim^{F'}_{j_0} \hat{C}$ and $C_0 \sim^{F'}_{j_1} \hat{C}$ (possibly $j_0=j_1$). 
Similarly, for each $i'\in C_k \cap C_{k+1}$, $k = 0, 1, \ldots, l-1$ (modulo $l$), $F_{i'}(C_k, E'_{C_k}) = F_{i'}(C_k, E_{C_k})$ and $F_{i'}(C_{k+1}, E'_{C_{k+1}}) = F_{i'}(C_{k+1}, E_{C_{k+1}})$, which implies that $C_{k+1} \succ_{i'}^{F'} C_k \Leftrightarrow C_{k+1} \succ_{i'}^{F} C_k$ and $C_{k+1} \sim_{i'}^{F'} C_k \Leftrightarrow C_{k+1} \sim_{i'}^{F} C_k$. In particular, for each $k=0, 1, \ldots, l-1$ (modulo $l$), $C_{k+1} \succ_{j_{k+1}}^{F'} C_{k}$. Then, by transitivity, $\hat{C} \succ^{F'}_{j_{0}} C_{l-1}$ and $C_1 \succ^{F'}_{j_1} \hat{C}$. As $F$ satisfies \textit{solidarity}, it follows
by Lemma \ref{wpa} that $\succsim^{F'}$ satisfies \textit{weak pairwise alignment} and, therefore, $C_1 \succsim_{j_2}^{F'} \hat{C}$. %Given that
 As $C_2 \succ_{j_2}^{F'} C_1$, it follows by transitivity that $C_2 \succ_{j_2}^{F'} \hat{C}$. %In a similar way, 
Similarly, we have that $C_k \succ_{j_k}^{F'} \hat{C}$ for each  $k \in \{3, \ldots, l-1\}$. Then, we have deduced that $\hat{C} \succ_{j_0}^{F'} C_{l-1}$ and $C_{l-1} \succ_{j_{l-1}}^{F'} \hat{C}$. If $j_{l-1} = j_{0}$, this contradicts transitivity. Otherwise, this implies that $\succsim^{F'}$ does not satisfy \textit{weak pairwise alignment},  which contradicts Lemma \ref{wpa}.
%, which is not possible by Lemma \ref{wpa}.

\end{proof}

\noindent Lemmas \ref{wpa} and \ref{noring} prove one implication of Theorem \ref{stable}, while the other is proven by the following lemma.

\begin{lemma}
\label{neces cond}
If $F$ does not satisfy solidarity, then there is $\{E_C\}_{C \subseteq N}$ such that $\succsim^{F}$ does not satisfy\textit{ weak pairwise alignment}.
\end{lemma}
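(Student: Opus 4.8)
The plan is to prove the contrapositive directly: assuming $F$ fails \emph{solidarity}, I will construct an explicit endowment profile $\{E_C\}_{C \subseteq N}$ whose induced problem $\succsim^F$ violates weak pairwise alignment. By Lemma~\ref{rm+c}, failure of \emph{solidarity} is equivalent to failure of either \emph{endowment monotonicity} or \emph{consistency}, so I would split the argument into these two cases and handle each by exhibiting two coalitions $C, C'$ with a shared pair of agents $i,j$ who rank $C$ and $C'$ in strictly opposite ways.

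First I would unpack what each failure gives me concretely. If \emph{endowment monotonicity} fails, there exist a coalition $C$ and endowments $E < E'$ together with an agent $i \in C$ such that $F_i(C,E) > F_i(C,E')$. Since the total endowment is larger in the second problem but agent $i$ gets strictly less, efficiency forces some other agent $j \in C$ with $F_j(C,E) < F_j(C,E')$. The obstacle here is that weak pairwise alignment is stated for two \emph{different} coalitions sharing an intersection, not for one coalition at two endowment levels; so I would need to realize the two endowment levels as the induced payoffs in two genuinely distinct coalitions that both contain $\{i,j\}$. The natural device is to invoke Lemma~\ref{auxiliary}-style reasoning in reverse together with \emph{consistency}: pass to the two-agent problem $(\{i,j\}, \cdot)$, where by \emph{consistency} the payoffs $(F_i, F_j)$ are reproduced, and then arrange an ambient set of coalition endowments so that $\{i,j\}$ itself appears as one of the coalitions with the required split. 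This lets the single-coalition monotonicity failure surface as a pairwise-alignment failure across two coalitions.

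The more delicate case is failure of \emph{consistency}. Here there is a coalition $C$, a proper subset $C' \subset C$, and an agent $i \in C'$ with $F_i\!\left(C', \sum_{k \in C'} F_k(C,E)\right) \neq F_i(C,E)$. I would set $C' $ and $C$ as the two coalitions in the intersection test, choosing $E_C = E$ and $E_{C'} = \sum_{k \in C'} F_k(C,E)$, so that the total payoff to $C'$ is held fixed across the two problems. Fixing that total means the deviations of individual payoffs must cancel, so an agent $i$ who gains in passing from $C$ to $C'$ must be paired with an agent $j \in C'$ who strictly loses — giving $C' \succ_i^F C$ while $C \succ_j^F C'$, the desired violation. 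I expect the main obstacle to be ensuring that the consistency failure genuinely involves \emph{two} agents ranking in opposite strict directions rather than a single agent changing while all others stay fixed; this requires using efficiency (the payoffs in $C'$ sum to the same fixed value $E_{C'}$ in both the reduced problem and the restriction of $F(C,E)$ to $C'$) to force a compensating strict loser whenever there is a strict gainer. Once both cases produce such an opposed pair, weak pairwise alignment fails by definition, completing the contrapositive and hence, together with Lemmas~\ref{wpa} and~\ref{noring}, the proof of Theorem~\ref{stable}.
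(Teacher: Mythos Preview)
Your detour through Lemma~\ref{rm+c} is unnecessary and introduces a real gap. The paper's proof is one line: negating the \emph{solidarity} axiom directly yields coalitions $C'\subset C$, agents $i,j\in C'$, and endowments $E_{C},E_{C'}$ with $F_i(C,E_{C})>F_i(C',E_{C'})$ while $F_j(C,E_{C})<F_j(C',E_{C'})$; assigning these two endowments (and arbitrary values to the remaining coalitions) gives $C\succ_i^F C'$ and $C'\succ_j^F C$, so weak pairwise alignment fails. No case split is needed, because the definition of \emph{solidarity} already speaks about two nested, hence distinct, coalitions.

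Your \emph{consistency}-failure case does work and is in fact a special instance of the argument above (with the particular choice $E_{C'}=\sum_{k\in C'}F_k(C,E_C)$). The gap is in your \emph{endowment monotonicity} case. You correctly identify the obstacle---a single coalition at two endowment levels is not two distinct coalitions in the induced problem---but your proposed fix does not close it. Passing to the pair $(\{i,j\},\cdot)$ via \emph{consistency} first presupposes that \emph{consistency} holds (this could be arranged by handling the consistency-failure case first), but more seriously it requires $\{i,j\}\subsetneq C$. If the monotonicity violation already occurs at a two-agent coalition $C=\{i,j\}$, there is no proper subcoalition containing both agents, and you cannot go upward either: embedding $\{i,j\}$ into a strictly larger coalition with matching payoffs is exactly what Lemma~\ref{auxiliary} does, but that lemma assumes \emph{solidarity}, and without \emph{endowment monotonicity} you also lose \emph{endowment continuity} (Lemma~\ref{rm+c}), so no intermediate-value argument is available. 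The direct negation of \emph{solidarity} sidesteps all of this because the two coalitions it produces are distinct by hypothesis.
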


\begin{proof}
Let $F$ be a sharing rule that does not satisfy \textit{solidarity}. Then, there exist $C, C' \subseteq N$, with $C' \subset C$, $i, j \in C'$ and $E_C, E_{C'} \in \mathbb{R}_+$ such that $F_i(C, E_C) > F_i(C', E_{C'})$ and $F_j(C, E_C) < F_j(C', E_{C'})$. Then, for $\{E_C\}_{C \subseteq N}$, we have that $C \succ_{i}^{F} C'$ and $C'\succ_j^{F} C$. Hence, $\succsim^{F}$ does not satisfy\textit{ weak pairwise alignment}.
\end{proof}

\bibliographystyle{ecta}
\bibliography{CoalitionFormation}

\end{document}